\newcommand{\CS}[1]{\mathcal{C\!S}\lc#1\rc}
\newcommand{\hs}[1]{\hspace{#1 cm}}
\newcommand{\change}[1]{{#1}}
\pretocmd\@bibitem{\color{black}\csname keycolor#1\endcsname}{}{\fail}
\newcommand\citecolor[1]{\@namedef{keycolor#1}{\color{black}}}
\crefname{app}{Appendix}{Appendices}
\crefname{cor}{Corollary}{Corollaries}
\crefname{prop}{Proposition}{Propositions}
\crefname{lemma}{Lemma}{Lemmas}
\crefname{defn}{Definition}{Definitions}
\crefname{conj}{Conjecture}{Conjectures}
\crefname{exam}{Example}{Examples}
\newtheorem{theorem}{Theorem}
\newtheorem{cor}{Corollary}
\newtheorem{lemma}{Lemma}
\newtheorem{exam}{Example}
\newcommand{\bs}{\boldsymbol}
\newcommand{\bb}{\mathbb}
\newcommand{\mcal}{\mathcal}
\newcommand{\eye}{\bs{I}}
\newcommand{\zero}{\bs{0}}
\newcommand{\lb}{\left(}
\newcommand{\rb}{\right)}
\newcommand{\ls}{\left[}
\newcommand{\rs}{\right]}
\newcommand{\lc}{\left\{}
\newcommand{\rc}{\right\}}
\newcommand{\lv}{\left\vert}
\newcommand{\rv}{\right\vert}
\newcommand{\LRV}[1]{{\left\vert\kern-0.25ex\left\vert\kern-0.25ex\left\vert #1 \right\vert\kern-0.25ex\right\vert\kern-0.25ex\right\vert}}
\newcommand{\nth}{^\mathsf{th}}
\newcommand{\tran}{^{\mathsf{T}}}
\newcommand{\rank}[1]{\mathsf{Rank}\lc#1\rc}
\newcommand{\matA}{\bs{A}}
\newcommand{\matB}{\bs{B}}
\newcommand{\matC}{\bs{C}}
\newcommand{\matJ}{\bs{J}}
\newcommand{\matN}{\bs{N}}
\newcommand{\matP}{\bs{P}}
\newcommand{\matQ}{\bs{Q}}
\newcommand{\matR}{\bs{R}}
\newcommand{\matW}{\bs{W}}
\newcommand{\bbC}{\bb{C}}
\newcommand{\bbR}{\bb{R}}
\newcommand{\calA}{\mcal{A}}
\newcommand{\calS}{\mcal{S}}
\newcommand{\calU}{\mcal{U}}
\newcommand{\calV}{\mcal{V}}
\newcommand{\vecu}{\bs{u}}
\newcommand{\vecv}{\bs{v}}
\newcommand{\vecx}{\bs{x}}
\newcommand{\vecy}{\bs{y}}
\newcommand{\vecz}{\bs{z}}
\def\BibTeX{{\rm B\kern-.05em{\sc i\kern-.025em b}\kern-.08em
    T\kern-.1667em\lower.7ex\hbox{E}\kern-.125emX}}
\begin{document}
\title{Output Controllability of a Linear Dynamical System with Sparse Controls}
\author{Geethu Joseph
\thanks{The author is with the Department  of Electrical Engineering and Computer Science at the Syracuse University, NY 13244, USA, Email: gjoseph@syr.edu.}}
\maketitle
\begin{abstract}
In this paper, we study the conditions to be satisfied by a discrete-time linear system to ensure output controllability using sparse control inputs. A set of necessary and sufficient  conditions can be directly obtained by extending the Kalman rank test for output controllability. However, the verification of these conditions is  computationally heavy due to their combinatorial nature. Therefore, we derive non-combinatorial conditions for output sparse controllability that can be verified with polynomial time complexity. Our results also provide bounds on the minimum sparsity level required to ensure output controllability of the system. This additional insight is useful for designing sparse control input that drives the system to any desired output.
\end{abstract}
\begin{keywords}
Controllability, linear dynamical systems, time-varying support, discrete-time system, sparsity, output controllability, Kalman rank test, optimal sparse control, general linear systems, minimal input
\end{keywords}
\section{Introduction}
With the widespread acceptance and use of networked control systems, various new challenging theoretical issues have emerged in control theory. One such 
problem is the analysis of a network system with \emph{sparse control inputs}. In particular, the controllability of systems under the sparsity constraints on the input is a relatively new concept~\cite{joseph2020controllability,joseph2020pcontrollability}. This paper characterizes \emph{output controllability} of a linear system with sparse control, i.e., the input applied at every time instant has a few nonzero entries compared to its dimension.

\change{\subsection{Practical Context and Examples}\label{sec:motivation}
Constraining the inputs to be sparse is often necessary to select a small subset of the available sensors or actuators at each time instant, due to bandwidth, energy, or physical constraints. The sparse control inputs arise in several areas like multiagent systems~\cite{caponigro2015sparse}, optimal actuator placement~\cite{stadler2009elliptic,chanekar2017optimal}, nodes selection~\cite{ikeda2018sparse,ikeda2018sparsity}, opinion dynamics~\cite{joseph2020ocontrollability}, environmental monitoring systems~\cite{joseph2020controllability,sriram2020control}, and robotics~\cite{vossen2006l1}, to name a few. 

In the following, we discuss two examples of the systems wherein the input is sparse and its support is time-varying.

\subsubsection{Networked Control Systems} The networked control systems, where the controller and plant communicate over a network, are often constrained by energy and bandwidth. In energy-constrained networks, energy-aware scheduling of actuators can help to extend the battery life of the nodes~\cite{jadbabaie2018deterministic}. In this case, choosing the same support for a longer time drains the battery of the selected set of nodes. Therefore, using a different set of nodes at each time instant can result in a longer network lifetime. Further, in networked control systems, the control inputs are required to meet the bandwidth constraints imposed by the links over which they are exchanged~\cite{nagahara2011sparse,li2016sparse}. The sparse vectors admit compressed representations, and consequently, using sparse inputs helps to reduce the bandwidth requirements~\cite{Donoho_Compressed_2006,
Candes_Robust_2006, Baraniuk_Compressive_2007}. Also, restricting the control inputs to a fixed support may severely limit the set of admissible inputs to the system. On the other hand, using different supports provides much greater flexibility without significantly increasing the communication requirements. Thus, this approach combines the benefits of the other two methods.

\subsubsection{Social Networks}\label{sec:network}A social network is often modeled using a graph whose vertices represent the individuals in the network, and the edges represent the social connection between the individuals. A popular model for the evolution of network opinion is the DeGroot model that uses a linear dynamical system~\cite{degroot1974reaching,golub2012homophily}. Here, the system state is denoted by a vector containing the opinion of each individual in the network, and the transition matrix is the adjacency matrix of the graph. Further, it is assumed that an external agent such as an election candidate, a paid bloggers, and a marketing agent, desires to drive the network opinion to a particular state by influencing only a few people on the network~\cite{nabi2014social,shu2019studying,cremonini2017controllability}. For example, consider an election candidate visiting the voters as part of a political campaign~\cite{joseph2020ocontrollability}. At each time instant, the candidate can only visit and influence a small group of people. Also, for better campaigning, the candidate does not visit the same set of people at each time. As a result, the support of the sparse control input (due to the candidate) also varies with time. Moreover, the goal of the candidate is not to influence all the voters, but it is enough if the candidate can influence the majority of the voters. Hence, the candidate designs the campaign strategy such that a subset of the network opinion can be driven to the desired value. This problem can be solved via the analysis of the output controllability of the network opinion. 
}

\subsection{Related Literature}\label{sec:related}
Before we present our model and results, we provide a brief review of the existing literature on sparse control. 
\subsubsection{Structural Controllability of Networks} The characterization of controllability of networks by using a few nodes is a well-studied problem\change{~\cite{liu2011controllability,egerstedt2012interacting,chapman2013strong,chapman2014controllability,
pequito2015framework,chapman2015state,mousavi2017structural,mousavi2021strong,
commault2019functional}}. However, these papers focused on \change{structural controllability or strongly structural controllability}. On the contrary, we deal with output controllability by assuming the knowledge of the system matrices. Also, in such problems, sparsity refers to the number of driver nodes and not the number of nonzero entries in the control input.

\subsubsection{Minimal Controllability Problem}
The minimal controllability refers to the problem of selecting a small set of input variables so that the system is controllable using the selected set~\cite{liu2017minimal,tzoumas2015minimal,olshevsky2014minimal}. This model is similar to ours except that the support of the control inputs does not change with time. Here, the support refers to the indices of the nonzero entries of the input. However, the time-varying support model is more flexible and offers better control over the system while incurring a similar cost (in terms of energy and bandwidth) as that of the \change{time-invariant} support model~\cite{joseph2020controllability}. Therefore, we analyze the controllability of a linear system with inputs having supports that change with time.

\subsubsection{Time-varying Actuator Scheduling}
The time-varying actuator scheduling deals with the control of linear systems using sparse inputs with a time-varying support model~\cite{dorfler2014sparsity,ito2014optimal,chatterjee2016characterization,kalise2017infinite,rao2017dynamic,zhou2017minimal,
ikeda2016maximum,ikeda2018sparse,ikeda2018sparsity}. However, previous studies on this problem mainly focused on the design of sparse control inputs and the optimal actuator scheduling (choosing the support of the control inputs at every time instant) problems. Such problems were formulated as optimization problems with $\ell_0$-norm constraint on the input. The $\ell_0$-norm-based problems are  NP-hard, and thus, they were solved using $\ell_p$-norm-based relaxation ($0<p\leq 1$) or greedy algorithms. While these studies attempted to devise approximation algorithms to design the control inputs, our focus in this paper is to gain new fundamental insights into the conditions for output controllability of a system using sparse inputs that follow the time-varying support model. 

\subsubsection{Sparse Controllability}
Sparse-controllability defined in \cite{joseph2020controllability} refers to the controllability of a linear system when the inputs are sparse, and their supports are time-varying. In  \cite{joseph2020controllability}, the authors derived the necessary and sufficient conditions for sparse controllability that are non-combinatorial. In particular, they established that any controllable system is sparse controllable if and only if the sparsity level exceeds the nullity of the \change{state matrix}.\footnote{The precise statements of the results are presented in \Cref{sec:compare}.} This work was also extended to controllability using nonnegative sparse control inputs~\cite{joseph2020pcontrollability}. However, in \cite{joseph2020controllability}, the authors only dealt with state controllability. A similar algebraic characterization of output sparse controllability is not straightforward. This is because the results on sparse controllability in \cite{joseph2020controllability} are based on the Popov–Belevitch–Hautus (PBH) test~\cite{hautus1970stabilization} for controllability. However, an analogous PBH test for output sparse controllability is not available in the literature. Consequently, the proof technique used in \cite{joseph2020controllability} is not applicable for output sparse controllability.

In a nutshell, in this paper, we derive the conditions for output sparse controllability of a linear system using the fundamental tools from linear algebra and matrix theory. 

\subsection{Our Contributions}
 We  present a discrete-time \change{linear time-invariant dynamical system} with sparse control inputs and time-varying support model in \Cref{sec:system}. We then show that the direct extension of the Kalman type rank test for output sparse controllability leads to a verification procedure with exponential time complexity. In \Cref{sec:necc_suff}, we show that any linear system that is output controllable is also output sparse controllable if and only if the sparsity level exceeds a certain bound which we present in \Cref{thm:necc_suff}. Hence, our result also provides the minimum sparsity level that ensures output controllability. In addition, we present several implications and insights from our result and compare it with the existing results on controllability and sparsity in \Cref{sec:intution,sec:sim,sec:complex,sec:add,sec:compare}. Finally, we discuss the design of sparse control inputs that drive the system output to a desired value in \Cref{sec:design}.

\null

\noindent\emph{Notation:} In the sequel, boldface lowercase letters denote vectors, boldface uppercase letters denote matrices, and calligraphic letters denote sets. \change{The $i\nth$ column of the matrix $\matA$ is denoted by $\matA_i$ while the submatrix of $\matA$ formed by the columns indexed by the set $\calA$ is denoted by $\matA_{\calA}$. The symbols $(\cdot)\tran$, $\rank{\cdot}$, $(\cdot)^{-1}$, $(\cdot)^\dagger$, and $\CS {\cdot}$ denote the transpose, rank,  inverse, pseudo-inverse, and column space of a matrix, respectively. Also, the cardinality of a set is denoted using $\lv\cdot\rv$, and the ceiling function is denoted using $\lceil\cdot\rceil$. Further, the notation $\eye$ and $\zero$ represent the identity matrix and the zero matrix (or vector), respectively. Finally, we use $\bbR$ to denote the set of real numbers and $\bbC$ for the set of complex numbers.
 }

\section{Output Sparse Controllability}\label{sec:system}
We consider the discrete-time linear dynamical system described by the triple $\lb\matA,\matB,\matC\rb$ in which the state and output evolve as follows:
\begin{equation}
\vecx_k = \matA\vecx_{k-1}+\matB\vecu_{k}\text{ and } \vecy_k=\matC\vecx_k.\label{eq:sys}
\end{equation}
Here, $\vecx_k\in\bbR^{N}$ denotes the state vector, $\vecu_k\in\bbR^{m}$ denotes the control input vector, and $\vecy_k\in\bbR^n$  denotes the output vector at time $k$. Also, $\matA$, $\matB$, and $\matC$ are the \change{state matrix},  input matrix and output matrix of the system, respectively. We assume that the control vectors are \change{constrained to be} $s$-sparse, i.e., at most $s$ entries of $\vecu_k$ are nonzeros, for all values of $k$. Under this sparsity constraint on the input, we revisit the classical output controllability problem. To be specific, our goal is to check if it is possible to drive the output to any final state $\vecy_f\in\bbR^n$, starting from any initial state $\vecx_0\in\bbR^N$,  using $s$-sparse control inputs within a finite time. This notion of controllability is referred to as \emph{output $s$-sparse controllability}, henceforth.

Using  \eqref{eq:sys}, the output at any time $K>0$ is
\begin{equation}\label{eq:sys_concat}
\vecy_{K} = \matC\sum_{k=1}^{K}\matA^{K-k}\matB\vecu_k+\matC\matA^K\vecx_0.
\end{equation}
So the system is output $s$-sparse controllable if and only if there exists an integer $0<K<\infty$ such that 
\begin{equation}\label{eq:combinatorial_union}
\bigcup_{\substack{\{\calS_k\subseteq\{1,2,\ldots,m\}:\\~\lv\calS_k\rv\leq s, 1\leq k\leq K\}}} \hs{-.52}\CS{ \!\matC\!\begin{bmatrix}\matA^{K-1} \matB_{\calS_1} & \matA^{K-2} \matB_{\calS_2} \ldots  \matB_{\calS_K}\end{bmatrix}\!}\!=\!\bbR^n\!.
\end{equation}
\change{However, a vector space over an infinite field cannot be a finite union of proper subspaces~\cite{clark2012covering}. Then, from \eqref{eq:combinatorial_union}, output $s$-sparse controllability holds only if there exist an integer $N < K<\infty$ and index sets $\lc \calS_i,\lv\calS_i\rv\leq s\rc_{i=1}^K$ such that 
\begin{equation}\label{eq:combinatorial}
\CS{\begin{bmatrix}
\matC\matA^{K-1}\matB_{\calS_1} & 
\matC\matA^{K-2}\matB_{\calS_2}&\ldots& 
\matC\matB_{\calS_K}
\end{bmatrix}}=\bbR^n.
\end{equation}}
The direct evaluation of the condition \eqref{eq:combinatorial} requires  computation of the column spaces of $\binom{N}{s}^K$ matrices of size $n\times Ks$. Thus, the verification of the condition is computationally expensive. Motivated by this, in the next section, we present some non-combinatorial conditions that help to test output sparse controllability. 

\section{Necessary and Sufficient Conditions}\label{sec:necc_suff}
The results of this section are based the controllability matrix $\matW$ and a new metric $R_i$ as defined below: 
\begin{align}\label{eq:W_defn}
\matW &\triangleq \begin{bmatrix}
\matA^{N-1}\matB& \matA^{N-2}\matB & \ldots &\matB
\end{bmatrix}\in\bbR^{N\times Nm}\\
\change{R_i}&\change{\triangleq \rank{\matC\matA^i\matW} -\rank{\matC\matA^{i+1}\matW}},\label{eq:Rdefn}
\end{align}
where $i\geq0$ is an integer. The main result of this section is as follows:

\begin{theorem}\label{thm:necc_suff}
Consider the discrete-time linear dynamical system $\lb\matA,\matB,\matC\rb$ defined in  \eqref{eq:sys}  whose controllability matrix $\matW$ is given by \eqref{eq:W_defn}. Then, for any integer $0<s\leq m$, \change{a set of necessary conditions} for output $s$-sparse  controllability  are
\begin{equation}\label{eq:ness}
\rank{\matC\matW} = n \text{ and }
 \max_{0\leq i\leq N-1}\frac{\sum_{j=0}^i R_j}{i+1}\leq s,
\end{equation}
and \change{a set of sufficient conditions} are
\begin{equation}
\rank{\matC\matW} = n \text{ and }
 \min\lc m, \max_{0\leq i\leq N-1}  R_i\rc\leq s.\label{eq:suff}
\end{equation}
Here, $R_i$ is as defined in \eqref{eq:Rdefn}.
\end{theorem}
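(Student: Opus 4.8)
\emph{Set-up.} I would phrase everything in terms of the nested subspaces $\calV_i \triangleq \CS{\matC\matA^i\matW}$ and the integers $\rho_i \triangleq \rank{\matC\matA^i\matW}=\dim\calV_i$, $i\ge0$. Three facts, all immediate from the Cayley--Hamilton theorem, underpin the proof: (a) $\calV_0\supseteq\calV_1\supseteq\cdots$, since the extra block $\matA^{i+N}\matB$ lies in the span of $\matA^i\matB,\dots,\matA^{i+N-1}\matB$; (b) $\calV_i=\CS{\matC\matA^i\matB}+\calV_{i+1}$, because $\calV_i=\sum_{j=0}^{N-1}\CS{\matC\matA^{i+j}\matB}$ while the only new block appearing in $\calV_{i+1}$, namely $\CS{\matC\matA^{i+N}\matB}$, already lies in $\calV_i$ by (a); (c) the chain $\CS{\matA^i\matW}\subseteq\bbR^N$ stabilizes within $N$ steps, hence so does $\calV_i=\matC\,\CS{\matA^i\matW}$, so that $\calV_i=\calV_N$ and $R_i=0$ for all $i\ge N$. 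From (a)--(b) one reads off $R_i=\rho_i-\rho_{i+1}\ge0$, $\sum_{j=0}^i R_j=\rho_0-\rho_{i+1}$ with $\rho_0=\rank{\matC\matW}$, and $R_i\le\dim\CS{\matC\matA^i\matB}\le\min\{m,n\}$. Finally, re-indexing the blocks of \eqref{eq:combinatorial} by the power $p=K-k$, output $s$-sparse controllability is equivalent to the existence of an integer $K>0$ and index sets $\calT_0,\dots,\calT_{K-1}$ with $|\calT_p|\le s$ such that $\sum_{p=0}^{K-1}\CS{\matC\matA^p\matB_{\calT_p}}=\bbR^n$.

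\emph{Necessity.} Assume such $K$ and $\{\calT_p\}$ exist. Each summand lies in $\calV_0$, so $\calV_0=\bbR^n$, i.e.\ $\rank{\matC\matW}=n$. For the averaged bound, fix $0\le i\le N-1$ and decompose $\bbR^n$ as the sum of $\sum_{p=0}^i\CS{\matC\matA^p\matB_{\calT_p}}$, of dimension at most $\sum_{p=0}^i\rank{\matB_{\calT_p}}\le(i+1)s$, and $\sum_{p=i+1}^{K-1}\CS{\matC\matA^p\matB_{\calT_p}}$, which by Cayley--Hamilton lies in $\calV_{i+1}$ and so has dimension at most $\rho_{i+1}$. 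Hence $n\le(i+1)s+\rho_{i+1}$, i.e.\ $\sum_{j=0}^i R_j=n-\rho_{i+1}\le(i+1)s$, which is \eqref{eq:ness}. (By (c), $\sum_{j=0}^i R_j=n-\rho_N$ for all $i\ge N-1$, so $\max_{i\ge0}\frac{1}{i+1}\sum_{j=0}^i R_j$ is attained on $\{0,\dots,N-1\}$, matching the statement.)

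\emph{Sufficiency.} If $m\le s$ the sparsity constraint is vacuous and $\rank{\matC\matW}=n$ is precisely the Kalman rank test for output controllability, so we may assume $m>s$; then $\min\{m,\max_i R_i\}\le s$ forces $R_i\le s$ for $0\le i\le N-1$. I would construct $\{\calT_p\}$ in two stages. Stage~1 (levels $0\le p\le N-1$): by (b) the columns of $\matC\matA^p\matB$ span the $R_p$-dimensional quotient $\calV_p/\calV_{p+1}$, so pick $\calT_p$ with $|\calT_p|=R_p\le s$ whose columns map to a basis of this quotient; then $\CS{\matC\matA^p\matB_{\calT_p}}+\calV_{p+1}=\calV_p$, and telescoping over $p=0,\dots,N-1$ gives $\sum_{p=0}^{N-1}\CS{\matC\matA^p\matB_{\calT_p}}+\calV_N=\calV_0=\bbR^n$. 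Stage~2 (covering $\calV_N$): starting from $\calP=\{\zero\}$, while $\calP\subsetneq\calV_N$ choose a power $q\ge N$ larger than every power used so far together with an index $l$ such that $\matC\matA^q\matB_l\notin\calP$ — such a pair exists, since otherwise $\calV_{q'}\subseteq\sum_{q\ge q'}\CS{\matC\matA^q\matB}\subseteq\calP$ for some $q'\ge N$, and $\calV_{q'}=\calV_N$ by (c), contradicting $\calP\subsetneq\calV_N$ — then set $\calT_q=\{l\}$ and replace $\calP$ by $\calP+\CS{\matC\matA^q\matB_l}$, which still lies in $\calV_N$ but has strictly larger dimension; after at most $\dim\calV_N$ steps $\calP=\calV_N$. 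Assigning $\calT_p=\emptyset$ to all unused powers and taking $K$ one beyond the last used power, the two stages yield $\sum_{p=0}^{K-1}\CS{\matC\matA^p\matB_{\calT_p}}\supseteq\sum_{p=0}^{N-1}\CS{\matC\matA^p\matB_{\calT_p}}+\calV_N=\bbR^n$, which is \eqref{eq:combinatorial}; hence \eqref{eq:suff} is sufficient.

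\emph{Main obstacle.} The necessity direction is routine dimension counting. The substance lies in the sufficiency argument: the Cayley--Hamilton identity (b) that powers the Stage~1 telescoping, and above all Stage~2, where one must verify that the single-power column spaces $\CS{\matC\matA^q\matB}$ at arbitrarily large $q$ still sum to the whole stabilized subspace $\calV_N$, so that a previously unused power is always available to enlarge $\calP$. A minor but necessary check is the bookkeeping that lets the extrema in \eqref{eq:ness} and \eqref{eq:suff} be taken over $0\le i\le N-1$ rather than all $i\ge0$.
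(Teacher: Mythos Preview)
Your argument is correct and takes a genuinely different route from the paper's. The paper first passes to the controllable subsystem via the Kalman decomposition, then works with the orthogonal projections $\tilde{\matC}_{(i)}=\tilde{\matC}\tilde{\matA}^{i}(\tilde{\matC}\tilde{\matA}^{i})^{\dagger}$: an induction on $i$ (Steps C--D) produces sparse inputs $\tilde{\vecu}_1,\dots,\tilde{\vecu}_r$ that peel off the layers $\CS{\tilde{\matC}_{(i)}-\tilde{\matC}_{(i+1)}}$, and a separate Jordan-form lemma (\Cref{lem:spanning_set}) is invoked in Step~E to show that any power $\tilde{\matA}^{p}$ with $p\ge r$ can be rewritten as $\tilde{\matA}^{q}\sum_i\alpha_i\tilde{\matA}^{i}$ for arbitrarily large $q$, which lets the remaining piece $\tilde{\matC}_{(r)}[\cdot]$ be absorbed by sparse inputs at distinct high powers. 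Your proof replaces all of this with the single subspace identity $\calV_i=\CS{\matC\matA^{i}\matB}+\calV_{i+1}$: Stage~1 is then a clean telescoping through the quotients $\calV_p/\calV_{p+1}$ (no pseudo-inverses or projection algebra), and Stage~2 is a greedy one-column-at-a-time argument that uses only the stabilization $\calV_q=\calV_N$ for $q\ge N$ in place of the Jordan-form lemma. The necessity arguments are essentially the same dimension count, but yours is carried out directly on $\calV_{i}$ without the Kalman change of coordinates.

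What each buys: your approach is shorter, self-contained, and avoids the auxiliary \Cref{lem:prod,lem:rank,lem:spanning_set}; it also makes transparent why the index range $0\le i\le N-1$ suffices in both \eqref{eq:ness} and \eqref{eq:suff}. The paper's approach, on the other hand, is constructive at the level of actual control inputs $\vecu_k\in\bbR^m$ (not just index sets $\calT_p$), and yields an explicit bound $K=r+r\lceil m/s\rceil\rank{\tilde{\matA}^r}$ on the horizon; your Stage~2 gives the weaker (though still finite) bound $K\le N+\dim\calV_N\cdot(\text{largest gap between chosen powers})$, with no direct control on the gaps.
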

\begin{proof}
See \Cref{app:necc_suff}.
\end{proof}

In the following subsections, we discuss the geometric intuition and insights from \Cref{thm:necc_suff}.

\change{
\subsection{Geometric Intuition} \label{sec:intution}
The rank condition in \eqref{eq:ness} and \eqref{eq:suff} is straightforward from the Kalman rank test for (non-sparse) output controllability (see \Cref{thm:output}). The bounds on the sparsity in \Cref{thm:necc_suff} can be intuitively explained as follows. 

\subsubsection{Necessary Condition} From \eqref{eq:combinatorial}, the system is $s$-sparse output controllable \emph{if and only if} the last $(i+1)s$ columns of the matrix in \eqref{eq:combinatorial} span the left null space $\calU_{i}$ of the submatrix formed by its remaining columns. However, $\calU_{i}\subseteq\bbR^n$ contains the left null space of $\matC\matA^{i+1}\matW$ because $\CS{\matW}$ is the subspace of the state vectors that can be reached from $\vecx_0=\zero$. So we arrive at
\begin{equation}
(i+1)s \geq n-\rank{\matC\matA^{i+1}\matW} = \sum_{j=0}^i R_i.
\end{equation} 
The above relation leads to the bound on sparsity given in the necessary condition \eqref{eq:ness}. The bound is not sufficient because the rank condition does not necessarily ensure the spanning condition in \eqref{eq:combinatorial} (see \Cref{ex:ness_suff1} below).

\subsubsection{Sufficient Condition}  The left null space of $\matC\matA^{i+1}\matW$ contains that of $\matC\matA^{i}\matW$. As a consequence, one possible case where \eqref{eq:combinatorial} holds is when the column space of $\matC\matA^i\matB_{\calS_{K-i}}\in\bbR^{n\times s}$ spans the subspace $\calV_{i}$. Here, $\calV_{i}$ is the subspace of the left null space of $\matC\matA^{i+1}\matW$ which is orthogonal to that of $\matC\matA^i\matW$, and its dimension is $R_i$. This case leads to
\begin{equation}
s\geq \rank{\matC\matA^i\matB_{\calS_{K-i}}} \geq R_{i}.
\end{equation} 
The above relation leads to the bound on sparsity given in the sufficient condition \eqref{eq:suff}. The condition is not necessary because it considers only one possible case for \eqref{eq:combinatorial} to hold (see \Cref{ex:ness_suff2} below). 

Please refer to \Cref{app:necc_suff} for the rigorous proof. We illustrate our idea using the following examples:}
\begin{exam}\label{ex:ness_suff1}
Consider the system $\lb\matA,\matB,\matC\rb$ in  \eqref{eq:sys}  with
\begin{equation}
\matA= \begin{bmatrix}
0 &1 &0 & 0 &0\\
0 &0&  1 & 0 &0\\
0 &0 &0 &0&0\\
0 & 0 & 0 & 0 & 1\\
0 & 0 & 0 & 0&0
\end{bmatrix}\matB=\begin{bmatrix}
1   &  1\\
0   &  0\\
1   &   0\\
0   &   0\\
0 & 1
\end{bmatrix}\matC =\begin{bmatrix}
1   &  0 &    0 \\
0   &  1    & 0\\
0   &  0  &   0\\
0   &  0   &  1\\
0   &  0   &  0
\end{bmatrix}\tran.
\end{equation}
For this system, we have $\rank{\matC\matW}=3=n$,
\begin{equation}
\underset{0\leq i\leq N-1}{\max}  R_i=2 \text{ and }
 \underset{0\leq i\leq N-1}{\max}\frac{\sum_{j=0}^{i}R_j}{i+1}=1. 
\end{equation}
Therefore, when $s=1$, the system satisfies the necessary condition, but it does not satisfy the sufficient condition. Using the brute force verification of output sparse controllability (given by \eqref{eq:combinatorial}), we see that the system is not output $1-$sparse controllable. Thus, this example shows that the necessary conditions of \Cref{thm:necc_suff} are not always sufficient for output sparse controllability. 
\end{exam}

\begin{exam}\label{ex:ness_suff2}
Consider the system $\lb\matA,\matB,\matC\rb$ in  \eqref{eq:sys}  with
\begin{equation}
\matA= \begin{bmatrix}
0 & 1 & 0 & 0\\
0 & 0 & 0 & 0\\
0 & 0 & 0 & 1\\
0 & 0 & 0 & 0
\end{bmatrix}\matB=\begin{bmatrix}
1   &  1\\
1  &  0\\
0   &   0\\
0   &   1\\
\end{bmatrix}\matC =\begin{bmatrix}
1   &  0\\
0  &   0\\
0  &   1\\
0   &  0
\end{bmatrix}\tran.
\end{equation}
For this system, we have $\rank{\matC\matW}=2=n$,
\begin{equation}
\underset{0\leq i\leq N-1}{\max}  R_i=2 \text{ and }
 \underset{0\leq i\leq N-1}{\max}\frac{\sum_{j=0}^{i}R_j}{i+1}=1. 
\end{equation}
Therefore, when $s=1$, the system satisfies the necessary condition, but it does not satisfy the sufficient condition. However, the system defined by $(\matA,\matB_2,\matC)$ is output sparse controllable \change{where $\matB_2\in\bbR^4$ is the second column of $\matB$}. Thus, this example shows that the sufficient conditions of \Cref{thm:necc_suff} are not always necessary.
\end{exam}

\subsection{Simultaneous Necessity and Sufficiency}\label{sec:sim}
\change{Both necessary and sufficient conditions in \Cref{thm:necc_suff}} become identical under some mild assumptions on the system, which we present in the following corollary.
\begin{cor}\label{cor:necc_suff_1}
Consider the discrete-time linear dynamical system $\lb\matA,\matB,\matC\rb$ defined in  \eqref{eq:sys}  whose controllability matrix $\matW$ is given by \eqref{eq:W_defn}. The conditions \eqref{eq:ness} and \eqref{eq:suff} become identical if and only~if
$\max_{0\leq i\leq N-1} R_i=R_0$,
where $R_i$ is as defined in \eqref{eq:Rdefn}. Also, in this case,  \eqref{eq:ness} and \eqref{eq:suff} reduce to
\begin{equation}\label{eq:necc_condition}
\change{\rank{\matC\matW} = n \text{ and } s \geq n-\rank{\matC\matA\matW}.}
\end{equation}
\end{cor}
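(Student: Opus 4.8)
The plan is to reduce the comparison of \eqref{eq:ness} and \eqref{eq:suff} to a comparison of two maxima. Since the rank requirement $\rank{\matC\matW}=n$ occurs identically in both, the two sets of conditions coincide exactly when the sparsity thresholds
\[
L\triangleq\max_{0\leq i\leq N-1}\frac{\sum_{j=0}^{i}R_j}{i+1}
\qquad\text{and}\qquad
S\triangleq\min\lc m,\ \max_{0\leq i\leq N-1}R_i\rc
\]
are equal. First I would show that the $\min$ in $S$ is never active; then comparing $L$ with $\max_{i}R_i$ both settles the equivalence and produces the reduced form.

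\emph{Step 1 (simplifying $S$).} By \eqref{eq:W_defn} one has $\CS{\matC\matA^{i}\matW}=\sum_{l=i}^{N-1+i}\CS{\matC\matA^{l}\matB}$ for every integer $i\geq0$. Using the Cayley--Hamilton theorem to express $\matC\matA^{N+i}\matB$ through $\matC\matA^{i}\matB,\ldots,\matC\matA^{N-1+i}\matB$, we get $\CS{\matC\matA^{N+i}\matB}\subseteq\CS{\matC\matA^{i}\matW}$, hence $\CS{\matC\matA^{i+1}\matW}\subseteq\CS{\matC\matA^{i}\matW}$ and $\CS{\matC\matA^{i}\matW}=\CS{\matC\matA^{i}\matB}+\CS{\matC\matA^{i+1}\matW}$. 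Therefore, using the isomorphism $(U+W)/W\cong U/(U\cap W)$,
\[
R_i=\dim\frac{\CS{\matC\matA^{i}\matW}}{\CS{\matC\matA^{i+1}\matW}}
=\dim\frac{\CS{\matC\matA^{i}\matB}}{\CS{\matC\matA^{i}\matB}\cap\CS{\matC\matA^{i+1}\matW}}
\leq\rank{\matC\matA^{i}\matB}\leq m .
\]
Consequently $\max_{0\leq i\leq N-1}R_i\leq m$, so $S=\max_{0\leq i\leq N-1}R_i$, and the case $i=0$ records $R_0=\rank{\matC\matW}-\rank{\matC\matA\matW}$.

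\emph{Step 2 (comparing the maxima).} As $L$ is a maximum of averages of $R_0,\ldots,R_{N-1}$, each of which is at most $\max_{i}R_i=S$, we have $L\leq S$, while the term $i=0$ in the definition of $L$ gives $L\geq R_0$. For the ``if'' part, assume $\max_{0\leq i\leq N-1}R_i=R_0$; then $R_j\leq R_0$ for all $j$, so every average $\tfrac{1}{i+1}\sum_{j=0}^{i}R_j$ is at most $R_0$, forcing $L=R_0$, and Step~1 gives $S=\max_{i}R_i=R_0$. Hence $L=S=R_0$, and since the condition also asserts $\rank{\matC\matW}=n$ we have $R_0=n-\rank{\matC\matA\matW}$, so \eqref{eq:ness} and \eqref{eq:suff} both reduce to \eqref{eq:necc_condition}. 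For the ``only if'' part, assume $L=S$; by Step~1 this reads $L=\max_{i}R_i$, and if $i^{\ast}$ attains this maximum then $\tfrac{1}{i^{\ast}+1}\sum_{j=0}^{i^{\ast}}R_j=\max_{i}R_i$ with each summand at most $\max_{i}R_i$, which forces $R_0=\cdots=R_{i^{\ast}}=\max_{i}R_i$; in particular $\max_{0\leq i\leq N-1}R_i=R_0$.

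I expect Step~1 to be the only real obstacle: the bound $\max_{i}R_i\leq m$ is what rules out the degenerate possibility $S=m<\max_{i}R_i$, which would otherwise be compatible with $R_0<\max_{i}R_i$ and break the ``only if'' direction. That bound needs nothing beyond Cayley--Hamilton and the fact that a quotient of a subspace cannot have larger dimension than the subspace; the remainder is elementary juggling of averages against maxima.
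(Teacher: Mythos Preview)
Your argument is correct and follows essentially the same line as the paper: both proofs reduce the question to comparing the two sparsity thresholds $L=\max_{i}\frac{1}{i+1}\sum_{j\leq i}R_j$ and $S=\min\{m,\max_i R_i\}$, and both hinge on the elementary fact that an average of the $R_j$ equals $\max_i R_i$ only when all the contributing terms equal that maximum (the paper phrases this as the contrapositive, showing $R_0<\max_i R_i\Rightarrow L<\max_i R_i$).

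The one substantive difference is your Step~1. The paper's Appendix~C silently replaces $S$ by $\max_i R_i$ without ever justifying that the $\min$ with $m$ is inactive; your Cayley--Hamilton argument showing $\CS{\matC\matA^{i+1}\matW}\subseteq\CS{\matC\matA^{i}\matW}$ and hence $R_i\leq\rank{\matC\matA^{i}\matB}\leq m$ fills that gap. This is exactly what is needed to make the ``only if'' direction airtight, since without $\max_i R_i\leq m$ one could in principle have $S=m=L$ while $R_0<\max_i R_i$. So your proof is in fact slightly more complete than the paper's. One minor wording issue: in your ``only if'' step, the index $i^{\ast}$ should be the one attaining $L$ (the maximum of the averages), not the one attaining $\max_i R_i$; the subsequent equality $\frac{1}{i^{\ast}+1}\sum_{j\leq i^{\ast}}R_j=\max_i R_i$ only follows under that reading.
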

\begin{proof}
See \Cref{app:necc_suff_1}.
\end{proof}

We note that the assumption in \Cref{cor:necc_suff_1} is \change{satisfied by a large class of matrices}. For example, suppose that $ \rank{\matA}=\rank{\matA^2}$. In this case, the row space and the column space of $\matA^i$ are same as those of $\matA$,  for $i\geq 1$. As a result, we obtain
\begin{equation}
\change{\rank{\matC\matA\matW}=\rank{\matC\matA^i\matW}.}
\end{equation}
Consequently, from \eqref{eq:Rdefn}, we get $R_i = 0\leq R_0$ for all values of  $i\geq 1$. Hence, $\max_{0\leq i\leq N-1} R_i=R_0$, and by \Cref{cor:necc_suff_1}, the necessary and sufficient conditions of \Cref{thm:necc_suff} reduce to \eqref{eq:necc_condition}. Here, the condition $ \rank{\matA}=\rank{\matA^2}$ implies that the algebraic and geometric multiplicities of the eigenvalue 0 of $\matA$ are the same\change{~\cite[Chapter 3]{horn2012matrix}}. This condition is satisfied by the families of matrices like the diagonalizable matrices,  the matrices with rank greater than or equal to $N-1$, etc. 

\change{The above observation is particularly useful to analyze the network opinion of social networks like Facebook (see \Cref{sec:network}). Such a network is modeled using an undirected graph, and therefore, the state matrix $\matA$  of the corresponding linear dynamical system is the symmetric adjacency matrix of the graph~\cite{joseph2020ocontrollability}. Since symmetric matrices are always diagonalizable, \eqref{eq:necc_condition} gives the necessary and sufficient conditions in this case.}
\subsection{Computational Complexity}\label{sec:complex}
The computational complexity to verify all the conditions of \Cref{thm:necc_suff} depends on the complexity to compute the rank of matrices \change{$\matC\matA^i\matW$}, for $i=0,1,\dots,N$. So, unlike the verification of the combinatorial condition \eqref{eq:combinatorial}, the verification of the conditions in \Cref{thm:necc_suff} possesses polynomial time complexity (in $N$ and $n$), and the complexity is independent of $s$. Moreover, the complexity of the verification test can be further reduced by using a simpler condition which does not involve the computation of \change{$\lc R_i\rc_{i=0}^{N-1}$} as presented below:
\begin{cor}\label{cor:necc_suff_2}
Consider the discrete-time linear dynamical system $\lb\matA,\matB,\matC\rb$ in  \eqref{eq:sys}  whose controllability matrix $\matW$ is given by \eqref{eq:W_defn}. The system is output $s$-sparse  controllable for any $s>0$ if 
\begin{equation}\label{eq:necc_cor_2}
\rank{\matC\matW} = n \text{ and }s \geq \min\lc m,N-\rank{\matA}\rc.
\end{equation}
\end{cor}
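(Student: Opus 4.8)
The plan is to obtain \Cref{cor:necc_suff_2} as a direct consequence of the sufficient half of \Cref{thm:necc_suff}, which already guarantees output $s$-sparse controllability whenever $\rank{\matC\matW}=n$ and $\min\{m,\max_{0\le i\le N-1}R_i\}\le s$. Since the rank condition is exactly the first clause of \eqref{eq:necc_cor_2}, the whole statement reduces to the single estimate
\[
\max_{0\le i\le N-1} R_i \;\le\; N-\rank{\matA},
\]
because this yields $\min\{m,\max_i R_i\}\le\min\{m,N-\rank{\matA}\}\le s$. (If $s>m$ the $s$-sparsity constraint is vacuous; applying \Cref{thm:necc_suff} with $s=m$ and using $\min\{m,\max_iR_i\}\le m$ covers that case, so one may assume $0<s\le m$ throughout.)

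To prove the estimate I would work with the nested family of reachable-type subspaces $\calR_i\triangleq\CS{\matA^i\matW}=\matA^i\,\CS{\matW}\subseteq\bbR^N$. First, by Cayley--Hamilton the last block $\matA^N\matB$ of $\matA\matW$ is a linear combination of $\matB,\matA\matB,\dots,\matA^{N-1}\matB$, so $\CS{\matA\matW}\subseteq\CS{\matW}$, i.e.\ $\matA\,\CS{\matW}\subseteq\CS{\matW}$; applying $\matA^i$ to this inclusion gives the chain $\calR_0\supseteq\calR_1\supseteq\calR_2\supseteq\cdots$. Next, since $\CS{\matC\matA^i\matW}=\matC\,\calR_i$ and $\CS{\matC\matA^{i+1}\matW}=\matC\matA\,\calR_i=\matC\,\calR_{i+1}$, definition \eqref{eq:Rdefn} reads $R_i=\dim(\matC\,\calR_i)-\dim(\matC\,\calR_{i+1})$. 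Applying rank--nullity to $\matC$ restricted to $\calR_i$ and to $\calR_{i+1}$ separately and using $\calR_{i+1}\cap\ker\matC\subseteq\calR_i\cap\ker\matC$, the resulting ``kernel loss'' term is nonnegative, hence $R_i\le\dim\calR_i-\dim\calR_{i+1}$. Finally, rank--nullity for $\matA$ restricted to $\calR_i$ (with $\calR_{i+1}=\matA\,\calR_i$) gives $\dim\calR_i-\dim\calR_{i+1}=\dim(\calR_i\cap\ker\matA)\le\dim\ker\matA=N-\rank{\matA}$. Chaining these inequalities yields $R_i\le N-\rank{\matA}$ for every $i\ge0$, which is precisely the required bound.

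There is essentially no computation here, and no serious obstacle; the only point that needs care is the bookkeeping in the two rank--nullity steps, namely getting the two ``loss'' terms to point in the favourable direction via the inclusions $\calR_{i+1}\cap\ker\matC\subseteq\calR_i\cap\ker\matC$ and $\calR_i\cap\ker\matA\subseteq\ker\matA$. I would also check the edge case $\rank{\matA}=N$ explicitly: then $\matA$ is invertible, $\calR_i=\CS{\matW}$ for all $i$, every $R_i=0$, and the bound is trivially met (consistent with $N-\rank{\matA}=0$). As a sanity check, in \Cref{ex:ness_suff1} and \Cref{ex:ness_suff2} one has $\max_i R_i=2=N-\rank{\matA}$, so the estimate is tight in general; this is exactly why replacing $\max_i R_i$ by the more accessible quantity $N-\rank{\matA}$ produces a slightly more conservative, but rank-computation-free, sufficient condition.
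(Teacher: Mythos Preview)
Your proof is correct and follows the same overall strategy as the paper: reduce everything to the single inequality $\max_{0\le i\le N-1} R_i \le N-\rank{\matA}$ and then invoke the sufficient half of \Cref{thm:necc_suff}. The route to that bound differs, though. The paper reuses the Kalman decomposition already set up in \Cref{app:necc_suff} (rewriting $R_i$ in terms of $\tilde{\matC},\tilde{\matA}\in\bbR^{r\times r}$ with $r=\rank{\matW}$) and applies Sylvester's rank inequality twice, obtaining the uniform intermediate bound $R_i\le\rank{\matW}-\rank{\matA\matW}$ along the way. You instead work directly with the nested chain $\calR_i=\CS{\matA^i\matW}$ and two applications of rank--nullity, which is more elementary and self-contained since it avoids the decomposition machinery entirely. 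Your argument in fact recovers the same sharper intermediate bound, because the nesting $\calR_i\subseteq\calR_0$ gives $\dim(\calR_i\cap\ker\matA)\le\dim(\calR_0\cap\ker\matA)=\rank{\matW}-\rank{\matA\matW}$; the paper records exactly this refinement in the remark following \Cref{cor:necc_suff_2}.
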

\begin{proof}
See \Cref{app:necc_suff_2}.
\end{proof}
Clearly, the relaxed bound on $s$ given in \eqref{eq:necc_cor_2} is easy to calculate. So if the system satisfies the bound in \Cref{cor:necc_suff_2}, we can avoid the more computationally heavy conditions of \Cref{thm:necc_suff}. Also, from the proof of the result, we notice that \eqref{eq:necc_cor_2} in \Cref{cor:necc_suff_2} can also be replaced with a more stringent condition,
\begin{equation}
s \geq \rank{\matW}-\rank{\matA\matW},
\end{equation} 
which follows from the proof of \Cref{cor:necc_suff_2} (see \eqref{eq:cor_inter_4} in \Cref{app:necc_suff_2}). 

Further, \Cref{cor:necc_suff_2} implies that if a linear system is reversible, i.e., $\matA$ is nonsingular, then (output) controllability implies and is implied by (output) $s$-sparse controllability, for any $1\leq s\leq m$. This property also holds for sparse controllability~\cite{joseph2020controllability}.

\subsection{Additional Insights}\label{sec:add}
Some interesting observations from \Cref{thm:necc_suff} are as follows:
\change{\subsubsection{Minimum Sparsity} 
If the sparsity $s$ satisfies the sufficient bound in \eqref{eq:suff}, any output controllable system is guaranteed to be sparse output controllable. On the other hand, if the sparsity $s$ violates the necessary bound in \eqref{eq:ness}, the system is guaranteed not to be sparse output controllable. Consequently, the minimum sparsity level $s^*$ that guarantees sparse output controllability lies in the interval defined by the two lower bounds given in \Cref{thm:necc_suff}. Thus, we get}
\begin{equation}\label{eq:min_sparsity}
 \max_{0\leq i\leq N-1}\frac{\sum_{j=0}^i R_j}{i+1}\leq s^*\leq \min\lc m, \max_{0\leq i\leq N-1}  R_i\rc.
\end{equation}
%The upper bound is useful to design algorithms (in particular, greedy algorithms) to compute sparse inputs that drive the system to a given desired state. The bound can also be used to evaluate the quality of the sparse inputs designed by the existing algorithms. 
\subsubsection{Bound on Necessary Sparsity} Since the sufficient conditions are more stringent than the necessary conditions, from \eqref{eq:ness} and \eqref{eq:suff}, we arrive at 
\begin{equation}\label{eq:bound_m}
m\geq\max_{0\leq i\leq N-1}\frac{\sum_{j=0}^i R_j}{i+1}.
\end{equation}
This condition holds for any output (non-sparse) controllable systems. 
\subsubsection{Time-invariant system} We note that all the systems that are output controllable using sparse inputs with the time-invariant support model are also output controllable using sparse inputs with the time-varying support model.  Hence, \eqref{eq:ness} is necessary for output controllability using sparse inputs with time-invariant support model.
\subsubsection{Non-canonical Basis} If a system is output controllable using control inputs which are $s$-sparse in the canonical basis, it is output controllable using inputs that admit $s$-sparse representations under any other basis $\Phi\in\bbR^{m\times m}$. This is because the change of basis is equivalent to replacing $\matB$ with $\matB\Phi$ which does not change the  condition in \eqref{eq:combinatorial}. Sparse controllability also possess a similar property~\cite{joseph2020controllability}.

\subsection{Comparison with Existing Results}\label{sec:compare}
In this section, we compare \Cref{thm:necc_suff} with the existing results on controllability and sparsity.
\renewcommand{\thetheorem}{\Alph{theorem}}
\setcounter{theorem}{0}

\subsubsection{Output controllability without constraints} 
The classical result for output controllability is as follows.
\begin{theorem}[{\cite{sontag2013mathematical}}]\label{thm:output}
Consider the linear dynamical system $\lb\matA,\matB,\matC\rb$ defined in  \eqref{eq:sys}  whose controllability matrix $\matW$ is given by \eqref{eq:W_defn}. The system is output controllable if and only if $\rank{\matC\matW}=n$.
\end{theorem}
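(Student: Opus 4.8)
The plan is to reduce output controllability, as defined through \eqref{eq:sys_concat}, to a statement about the subspace of outputs reachable from the zero initial state, and then to identify that subspace with $\CS{\matC\matW}$ via the Cayley--Hamilton theorem. Fix an integer $K>0$. From \eqref{eq:sys_concat}, for $\vecx_0=\zero$ the set of outputs attainable at time $K$ by varying $\vecu_1,\dots,\vecu_K\in\bbR^m$ is the image of the linear map $(\vecu_1,\dots,\vecu_K)\mapsto\matC\sum_{k=1}^{K}\matA^{K-k}\matB\vecu_k$, namely the subspace
\[
\calR_K \;\triangleq\; \sum_{j=0}^{K-1}\CS{\matC\matA^{j}\matB}\;\subseteq\;\bbR^{n}.
\]
I would record two elementary facts. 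First, $\calR_K\subseteq\calR_{K+1}$ for every $K$, since every summand of $\calR_K$ appears in $\calR_{K+1}$. Second, by Cayley--Hamilton, $\matA^{N}$ is a linear combination of $\matI,\matA,\dots,\matA^{N-1}$, so $\matC\matA^{N}\matB$ — and, multiplying the characteristic-polynomial relation by powers of $\matA$, every $\matC\matA^{K}\matB$ with $K\geq N$ — has column space contained in $\calR_N$. Combining the two facts, $\calR_K=\calR_N$ for all $K\geq N$, and since the block columns of $\matC\matW$ are precisely $\matC\matA^{N-1}\matB,\dots,\matC\matA\matB,\matC\matB$, we get $\calR_K=\CS{\matC\matW}$ for every $K\geq N$.

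Given this, the two implications are short. For sufficiency, suppose $\rank{\matC\matW}=n$, i.e. $\calR_N=\bbR^n$. Then for any $\vecx_0\in\bbR^N$ and any target $\vecy_f\in\bbR^n$, the vector $\vecy_f-\matC\matA^{N}\vecx_0$ lies in $\bbR^n=\calR_N$, so there exist inputs $\vecu_1,\dots,\vecu_N$ producing $\vecy_N=\vecy_f$; hence the system is output controllable (in exactly $K=N$ steps). For necessity, suppose the system is output controllable. Taking the particular initial state $\vecx_0=\zero$, every $\vecy_f\in\bbR^n$ is reached at some finite time $K$, so $\vecy_f\in\calR_K\subseteq\calR_N=\CS{\matC\matW}$; as $\vecy_f$ was arbitrary, $\CS{\matC\matW}=\bbR^n$, that is $\rank{\matC\matW}=n$. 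Equivalently, one may argue the contrapositive: if $\rank{\matC\matW}<n$, pick $\zero\neq\vecw\in\bbR^n$ with $\vecw\tran\matC\matW=\zero$, whence $\vecw\tran\matC\matA^{j}\matB=\zero$ for all $j\geq0$ by Cayley--Hamilton, so $\vecw\tran\vecy_K=\vecw\tran\matC\matA^{K}\vecx_0$ is input-independent and no $\vecy_f$ off that line is reachable from $\vecx_0=\zero$.

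This is the classical argument behind \Cref{thm:output} (stated here from \cite{sontag2013mathematical}), and it is essentially routine; there is no genuinely hard step. The only points requiring mild care are the passage from the ``for every $\vecx_0$'' clause in the definition to the zero-initial-state reachable set — handled by absorbing the affine term $\matC\matA^{K}\vecx_0$ into the free target $\vecy_f$ — and the use of Cayley--Hamilton, which both collapses the infinitely many powers of $\matA$ into the first $N$ and guarantees that the fixed finite horizon $K=N$ already suffices.
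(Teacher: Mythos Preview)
Your argument is correct and is the standard textbook proof. Note, however, that the paper does not actually prove \Cref{thm:output}: it is stated with a citation to \cite{sontag2013mathematical} and used as a known classical result, so there is no ``paper's own proof'' to compare against. Your write-up would serve perfectly well as the missing justification.
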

 If we remove the sparsity constraint, i.e., when $s=m$, \Cref{thm:necc_suff} coincides with \Cref{thm:output}, as expected.
\subsubsection{Controllability with sparse inputs} 
The next result gives the necessary and sufficient conditions for  controllability with sparse control inputs. 
\begin{theorem}[{\cite[Theorem 1]{joseph2020controllability}}]\label{thm:sparse}
Consider the linear dynamical system $\lb\matA,\matB,\matC\rb$ defined in  \eqref{eq:sys}  whose controllability matrix $\matW$ is given by \eqref{eq:W_defn}. The system is controllable using $s$-sparse inputs if and only if the following conditions hold:
\begin{align}\label{eq:sparse_con} 
\rank{\begin{bmatrix}
\lambda\eye-\matA & \matB
\end{bmatrix}} &= N \leq \rank{\matA} + s, \; \forall \lambda\in\bbC 
\end{align}
\end{theorem}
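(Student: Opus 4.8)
The plan is to obtain \Cref{thm:sparse} as the $\matC=\eye$ specialization of the paper's own \Cref{thm:necc_suff}, rather than re-deriving it from scratch. When $\matC=\eye$ (so $n=N$), output $s$-sparse controllability is exactly $s$-sparse (state) controllability, and the two requirements of \Cref{thm:sparse} are translations of the conditions in \Cref{thm:necc_suff}: the rank condition $\rank{\matW}=N$ is equivalent to the PBH condition $\rank{[\lambda\eye-\matA\ \matB]}=N$ for all $\lambda\in\bbC$ (the standard equivalence of the Kalman rank test, whose output form is \Cref{thm:output}, with the PBH test), while the sparsity requirement $N\le\rank{\matA}+s$ is simply $s\ge N-\rank{\matA}$, the nullity of $\matA$.

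The crux is to show that, under this specialization, the necessary bound \eqref{eq:ness} and the sufficient bound \eqref{eq:suff} collapse to a single threshold, so that \Cref{thm:necc_suff} already delivers an ``if and only if.'' First I would dispatch the degenerate branch: if $\rank{\matW}<N$ then $\rank{\matC\matW}=\rank{\matW}<n$, so the system fails even the necessary condition of \Cref{thm:necc_suff} and is not sparse controllable, while simultaneously the PBH test fails; both sides of \Cref{thm:sparse} are false and the equivalence holds trivially. So assume $\rank{\matW}=N$. Then $\matW$ is surjective, hence $\CS{\matA^i\matW}=\CS{\matA^i}$ and $\rank{\matA^i\matW}=\rank{\matA^i}$, so from \eqref{eq:Rdefn} the metric becomes $R_i=\rank{\matA^i}-\rank{\matA^{i+1}}$.

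Next I would invoke the standard fact that the rank drops $\rank{\matA^i}-\rank{\matA^{i+1}}$ are non-increasing in $i$, a direct consequence of the Jordan structure of $\matA$ \cite[Chapter 3]{horn2012matrix}. Therefore $\max_{0\le i\le N-1}R_i=R_0=N-\rank{\matA}$, which is exactly the hypothesis $\max_i R_i=R_0$ of \Cref{cor:necc_suff_1}. Applying that corollary, the necessary and sufficient conditions of \Cref{thm:necc_suff} coincide and reduce to \eqref{eq:necc_condition}, namely $\rank{\matW}=N$ and $s\ge N-\rank{\matA\matW}$. Since $\rank{\matA\matW}=\rank{\matA}$ under $\rank{\matW}=N$, this reads $\rank{\matW}=N$ and $s\ge N-\rank{\matA}$, which is precisely the condition of \Cref{thm:sparse} after the two translations noted above. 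This closes the equivalence.

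The main obstacle lies not in the assembly but in the two facts it leans on: (i) the verification that $R_0=\max_i R_i$, i.e., that the rank-drop sequence is non-increasing, which is what makes \Cref{cor:necc_suff_1} applicable and is the precise step at which the threshold ``nullity of $\matA$'' emerges; and (ii) the Kalman-to-PBH translation, needed to match the statement's PBH form. The genuine mathematical content resides in the proof of \Cref{thm:necc_suff} itself, which this route takes as given. A fully self-contained alternative would prove \Cref{thm:sparse} directly: necessity of $N\le\rank{\matA}+s$ follows from the observation that in $[\matA^{K-1}\matB_{\calS_1}\ \cdots\ \matB_{\calS_K}]$ every column except those of the last block lies in $\CS{\matA}$, so the column space has dimension at most $\rank{\matA}+\lv\calS_K\rv\le\rank{\matA}+s$; the real difficulty in that route is entirely in the sufficiency construction, i.e., exhibiting supports $\{\calS_k\}$ of size at most $s$ whose selected columns span $\bbR^N$ while respecting the per-time-step budget, which is exactly where $s\ge N-\rank{\matA}$ is consumed.
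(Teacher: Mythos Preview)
Your proposal is correct and follows essentially the same route the paper itself sketches in \Cref{sec:compare}: specialize \Cref{thm:necc_suff} to $\matC=\eye$, verify that $\max_i R_i=R_0$ so that \Cref{cor:necc_suff_1} applies, and then translate $\rank{\matW}=N$ into the PBH form. The only minor deviation is in how you justify $R_i\leq R_0$: you invoke the non-increasing rank-drop sequence $\rank{\matA^i}-\rank{\matA^{i+1}}$ from the Jordan structure, whereas the paper uses the Sylvester-type bound \eqref{eq:cor_inter_4} from the proof of \Cref{cor:necc_suff_2} to get $R_i\leq N-\rank{\matA}=R_0$ directly.
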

The connections between \Cref{thm:necc_suff,thm:sparse} are as follows:
\begin{enumerate}[label=(i)]
\item When $\matC=\eye$, the notion of output sparse controllability and sparse controllability are the same. If we substitute $\matC=\eye$ in \Cref{thm:necc_suff}, the rank conditions of \eqref{eq:ness} and \eqref{eq:suff} are equivalent to $\rank{\matW} = N$.
Also, using the arguments presented in the proof of \Cref{cor:necc_suff_2} (see \eqref{eq:cor_inter_4} in \Cref{app:necc_suff_2}),  for all $0\leq i\leq N$
\begin{align}
R_i  &\leq N-\rank{\matA} \\
&= \change{\rank{\matW} - \rank{\matA\matW}} = R_0,
\end{align}
which follows when $\rank{\matW} = N$. Hence, by  \Cref{cor:necc_suff_1}, the system is output $s$-sparse controllable if and only if  \eqref{eq:necc_condition} holds which is equivalent to
\begin{equation}
\rank{\matW} = N \text{ and } s \geq N-\rank{\matA}.
\end{equation}
However, the condition $\rank{\matW} = N$ is equivalent to the rank condition in \eqref{eq:sparse_con} due to the equivalence of the PBH test~\cite{hautus1970stabilization} and Kalman rank test for controllability~\cite{kalman1959general}. In other words, when $\matC=\eye$, \Cref{thm:necc_suff} reduces to \Cref{thm:sparse}. 

\item The proof of \Cref{thm:sparse} given in \cite{joseph2020controllability} is based on the PBH test for controllability whereas our proof of \Cref{thm:necc_suff} is based on the fundamental results in linear algebra. Therefore, the proof of \Cref{thm:necc_suff} provides an alternate method to establish \Cref{thm:sparse}.

\item Comparing \Cref{cor:necc_suff_2} and \Cref{thm:sparse}, we conclude that when the sparsity $s\geq  N-\rank{\matA}$, the system is output $s$-sparse controllable if it is output controllable (i.e., $\matC\matW$ is full row rank); and the system is $s$-sparse controllable if it is controllable (i.e., $\matW$ is full row rank).
\end{enumerate}
\subsubsection{Necessary conditions for output sparse controllability} 
We next present a known set of necessary conditions for output $s$-sparse controllability.
\begin{theorem}[{\cite[Corollary 1]{joseph2020controllability}}]\label{thm:sparse_necc}
Consider the linear dynamical system $\lb\matA,\matB,\matC\rb$ defined in  \eqref{eq:sys}  whose controllability matrix $\matW$ is given by \eqref{eq:W_defn}. The system is output controllable using $s$-sparse vectors only if the following conditions hold:
\begin{align}
\rank{\matC\begin{bmatrix}
\lambda\eye-\matA & \matB
\end{bmatrix}} &= n,  \; \forall \lambda\in\bbC\label{eq:con1}\\
\rank{\matC\matA} &\geq n-s.\label{eq:con2}
\end{align}
\end{theorem}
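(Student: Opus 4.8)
The plan is to argue directly from the combinatorial necessary condition \eqref{eq:combinatorial} established in \Cref{sec:system}: output $s$-sparse controllability guarantees the existence of an integer $K>N$ and index sets $\lc\calS_i\rc_{i=1}^K$ with $\lv\calS_i\rv\leq s$ for which the matrix $\matM\triangleq\ls\matC\matA^{K-1}\matB_{\calS_1}\ \matC\matA^{K-2}\matB_{\calS_2}\ \ldots\ \matC\matB_{\calS_K}\rs$ has $\CS{\matM}=\bbR^n$, and I will read off both stated conditions from this single fact.

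For \eqref{eq:con1}, first note that output $s$-sparse controllability trivially implies (unconstrained) output controllability, so \Cref{thm:output} gives $\rank{\matC\matW}=n$. (Alternatively, $\matM$ is a column submatrix of $\matC\ls\matA^{K-1}\matB\ \ldots\ \matB\rs$, which by Cayley--Hamilton has the same column space as $\matC\matW$, so $\bbR^n=\CS{\matM}\subseteq\CS{\matC\matW}$.) I then argue by contraposition: if \eqref{eq:con1} failed at some $\lambda_0\in\bbC$, there would be a nonzero $\vecz\in\bbC^n$ with $\vecz\tran\matC\matA=\lambda_0\vecz\tran\matC$ and $\vecz\tran\matC\matB=\zero$; iterating the first identity yields $\vecz\tran\matC\matA^k\matB=\lambda_0^k\vecz\tran\matC\matB=\zero$ for every $k\geq0$, hence $\vecz\tran\matC\matW=\zero$. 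Since $\matC\matW$ is a real matrix, this forces $\rank{\matC\matW}<n$, contradicting the previous step.

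For \eqref{eq:con2}, the idea is to split off the last block column of $\matM$. Write $\matM=\ls\matP\ \matC\matB_{\calS_K}\rs$ with $\matP\triangleq\ls\matC\matA^{K-1}\matB_{\calS_1}\ \ldots\ \matC\matA\matB_{\calS_{K-1}}\rs$; because $K>N\geq1$ this $\matP$ is nonempty and factors as $\matC\matA\ls\matA^{K-2}\matB_{\calS_1}\ \ldots\ \matB_{\calS_{K-1}}\rs$, so $\CS{\matP}\subseteq\CS{\matC\matA}$. As the column space of a horizontal concatenation is the sum of the column spaces, $\bbR^n=\CS{\matP}+\CS{\matC\matB_{\calS_K}}$, whence $n\leq\rank{\matP}+\rank{\matC\matB_{\calS_K}}\leq\rank{\matC\matA}+\lv\calS_K\rv\leq\rank{\matC\matA}+s$, i.e.\ \eqref{eq:con2}.

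This is essentially routine; the only delicate points are the bookkeeping in the factorization for \eqref{eq:con2} (checking the leftover block $\matP$ is genuinely left-divisible by $\matC\matA$, which relies on $K>N$), and, for \eqref{eq:con1}, handling a possibly complex $\lambda_0$ and its complex left null vector by invoking that the rank of the real matrix $\matC\matW$ is the same over $\bbR$ and over $\bbC$. I do not anticipate a genuine obstacle, since both conclusions are weaker than, and follow from, facts already available in the paper, namely \eqref{eq:combinatorial} and \Cref{thm:output}.
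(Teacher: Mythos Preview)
Your argument is correct. For \eqref{eq:con1} it coincides with what the paper does in \Cref{sec:compare}: both assume a left annihilator $\vecz$ with $\vecz\tran\matC\matA=\lambda\vecz\tran\matC$ and $\vecz\tran\matC\matB=\zero$, deduce $\vecz\tran\matC\matW=\zero$, and contradict $\rank{\matC\matW}=n$. You are in fact slightly more careful than the paper in allowing $\vecz\in\bbC^n$ and invoking that real rank equals complex rank.

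For \eqref{eq:con2} the routes differ. The paper derives it as a consequence of its own necessary condition in \Cref{thm:necc_suff}: from $s\geq R_0=n-\rank{\matC\matA\matW}$ (established via the Kalman decomposition in \Cref{app:necc_suff}) together with $\rank{\matC\matA\matW}\leq\rank{\matC\matA}$, one gets $s\geq n-\rank{\matC\matA}$. You instead split off the last block of the matrix in \eqref{eq:combinatorial} and bound $n\leq\rank{\matC\matA}+s$ directly. Your approach is more elementary and self-contained --- it needs only \eqref{eq:combinatorial} and a subadditivity-of-rank step, with no Kalman decomposition --- whereas the paper's detour through \Cref{thm:necc_suff} yields the sharper intermediate inequality $s\geq n-\rank{\matC\matA\matW}$ before relaxing to \eqref{eq:con2}. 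Your remark that $K>N\geq 1$ guarantees the leftover block $\matP$ is nonempty and genuinely factors through $\matC\matA$ is exactly the bookkeeping needed; in the degenerate case $K=1$ the conclusion would still hold since then $n\leq s$ already.
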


Our necessary conditions in \Cref{thm:necc_suff} are stronger than those in \Cref{thm:sparse_necc}. To verify this, suppose that \eqref{eq:con1} does not hold, i.e.,  there exist $\lambda\in\bbC$ and $\vecz\in\bbR^n$ such that $\vecz\tran\matC\matA=\lambda\vecz\tran\matC$ and $\vecz\tran\matC\matB=\zero$. In this case, we obtain $\vecz\tran\matC\matW=\zero$ which implies that the rank condition of \eqref{eq:ness} does not hold. Thus, \eqref{eq:con1} is necessary for \eqref{eq:ness} to hold. Also,  the necessary condition \eqref{eq:ness} of \Cref{thm:necc_suff} implies that if the system in  \eqref{eq:sys}  is output $s$-sparse controllable, 
\begin{equation}
R_0 = \change{n-\rank{\matC\matA\matW} }\geq n-\rank{\matC\matA}. 
\end{equation}
As a consequence, \eqref{eq:con2} is necessary for the sparsity bound in \eqref{eq:ness} to hold. Hence, we conclude that \Cref{thm:necc_suff} is stronger than  \Cref{thm:sparse_necc}.
\subsection{Design of Sparse  Control Inputs}\label{sec:design}
\Cref{thm:necc_suff} focuses on the existence of a set of control inputs that ensures output controllability while satisfying the sparsity constraints.  However, another problem related to output controllability is the design of this set of sparse vectors. The problem can be cast as a sparse signal recovery problem using \eqref{eq:sys_concat} where we solve for the unknown sparse vectors $\lc\vecu_k\rc_{k=1}^K$~\cite{sefati2015linear,kafashan2016relating}. 

We first note from \cite[Corollary 2]{joseph2020controllability} that to drive the system from any given initial state $\vecx_0\in\bbR^N$ to any final output $\vecy_f\in\bbR^n$, we need at most $n$ control inputs ($K=n$). Thus, from \eqref{eq:sys_concat}, the design of control inputs reduces to solving for $\tilde{\vecu}=\begin{bmatrix}
\vecu_1\tran & \vecu_2\tran & \ldots & \vecu_n\tran
\end{bmatrix}\tran\in\bbR^{nm}$ using
\begin{equation}\label{eq:piecewise}
\vecy_f -  \matC\matA^n\vecx_0 = \matC\begin{bmatrix}
\matA^{n-1}\matB & \matA^{n-2}\matB &\ldots & \matB
\end{bmatrix}\tilde{\vecu}.
\end{equation} 
Here, the unknown vector $\tilde{\vecu}$ is formed by concatenating $n$ vectors which are $s$-sparse. This signal structure is known as piece-wise sparsity. Hence, \eqref{eq:piecewise} can be efficiently solved (in polynomial time) using  piece-wise sparse recovery algorithms such as the piece-wise orthogonal matching pursuit~\cite{li2016piecewise,sriram2020control}. 

\section{Conclusion}
We derived a set of necessary and sufficient conditions under which a discrete-time linear system is output sparse controllable. Our results apply to any general linear system and do not impose any restrictions on the system matrices. Both necessary and sufficient conditions included a rank condition on the output controllability matrix and a lower bound on the sparsity bound. We also derived the conditions under which both sets of conditions became identical, and showed that the results on output controllability (without any constraints) and controllability (with and without sparsity constraints on the inputs) can be derived as a special case of our result. An important direction for future work is to derive the conditions which are jointly necessary and sufficient for output sparse controllability. Studying output sparse controllability under other constraints on the system like bounded energy, nonnegativity, etc., are also avenues for future work. 

\appendices
\crefalias{section}{appendix}

\section{Proof of \Cref{thm:necc_suff}}\label{app:necc_suff}
\change{The key idea of the proof is to use the Kalman decomposition~\cite[Section 6.4]{chen1998linear} to prove the necessity of \eqref{eq:ness} and sufficiency of \eqref{eq:suff}. The proof also relies on the following results from linear algebra.

\begin{lemma}[\cite{djordjevic2010reverse}] \label{lem:prod}
For any matrix $\matA$ and any orthogonal matrix $\matQ$ of compatible dimension, we have $\lb\matQ\matA\rb^{\dagger} = \matA^\dagger\matQ\tran$.
\end{lemma}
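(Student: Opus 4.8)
The plan is to verify directly that $\matX \triangleq \matA^\dagger\matQ\tran$ satisfies the four defining Moore--Penrose conditions for the pseudo-inverse of $\matQ\matA$; since the Moore--Penrose inverse is unique, this immediately yields $\lb\matQ\matA\rb^{\dagger} = \matA^\dagger\matQ\tran$. The only ingredients needed are the orthogonality relation $\matQ\tran\matQ = \eye$ and the four identities characterising $\matA^\dagger$, namely $\matA\matA^\dagger\matA = \matA$, $\matA^\dagger\matA\matA^\dagger = \matA^\dagger$, and the symmetry of both $\matA\matA^\dagger$ and $\matA^\dagger\matA$.

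Concretely, I would carry out the following checks in order.
\begin{enumerate}[label=(\roman*)]
\item $(\matQ\matA)\matX(\matQ\matA) = \matQ\matA\matA^\dagger(\matQ\tran\matQ)\matA = \matQ(\matA\matA^\dagger\matA) = \matQ\matA$.
\item $\matX(\matQ\matA)\matX = \matA^\dagger(\matQ\tran\matQ)\matA\matA^\dagger\matQ\tran = (\matA^\dagger\matA\matA^\dagger)\matQ\tran = \matA^\dagger\matQ\tran = \matX$.
\item $(\matQ\matA)\matX = \matQ(\matA\matA^\dagger)\matQ\tran$ is symmetric, since $\matA\matA^\dagger$ is symmetric and conjugation by the orthogonal matrix $\matQ\tran$ preserves symmetry.
\item $\matX(\matQ\matA) = \matA^\dagger(\matQ\tran\matQ)\matA = \matA^\dagger\matA$ is symmetric.
\end{enumerate}
Checks (i), (ii), and (iv) each use the cancellation $\matQ\tran\matQ = \eye$ to remove the interior copy of $\matQ$, whereas (iii) needs only the symmetry of $\matA\matA^\dagger$.

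There is essentially no obstacle here: the statement is a routine consequence of the Penrose characterisation, and the only thing requiring mild care is to invoke orthogonality in exactly the right places (note that, since $\matQ$ is square orthogonal, one also has $\matQ\matQ\tran = \eye$, but this stronger identity is never used). An alternative would be to simply cite \cite{djordjevic2010reverse} and omit the computation altogether, but including this short verification keeps \Cref{app:necc_suff} self-contained at negligible cost.
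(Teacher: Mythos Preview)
Your verification via the four Moore--Penrose conditions is correct; each of the checks (i)--(iv) goes through exactly as you write, using only $\matQ\tran\matQ=\eye$ and the defining identities for $\matA^\dagger$. The paper itself does not prove \Cref{lem:prod} at all: it merely states the lemma and attributes it to \cite{djordjevic2010reverse}. Your short direct computation is therefore not a duplication but a strictly more self-contained alternative, and at four one-line checks it is cheap enough that there is no reason to prefer the bare citation.
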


\begin{lemma}\label{lem:rank}
Any matrices $\matA$ and $\matW$ of compatible dimensions satisfies
\begin{equation}
\rank{\matA\matW}=\rank{\matA\matW\matW^{\dagger}}.
\end{equation}
\end{lemma}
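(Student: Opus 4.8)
The plan is to exploit the defining Moore--Penrose identity $\matW\matW^\dagger\matW=\matW$ together with the elementary fact that right-multiplication by any matrix cannot increase rank (equivalently, $\rank{\matX\matY}\leq\rank{\matX}$ for conformable $\matX,\matY$). First I would observe that $\matW\matW^\dagger$ is the matrix of the orthogonal projector onto $\CS{\matW}$, so post-multiplying it by $\matW$ recovers $\matW$ itself; hence $\matA\matW\matW^\dagger\matW=\matA\matW$. This gives the chain
\begin{equation}
\rank{\matA\matW}=\rank{\matA\matW\matW^\dagger\matW}\leq\rank{\matA\matW\matW^\dagger},
\end{equation}
where the inequality is $\rank{\matX\matY}\leq\rank{\matX}$ applied with $\matX=\matA\matW\matW^\dagger$ and $\matY=\matW$.

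For the reverse inequality I would argue in the same spirit: writing $\matA\matW\matW^\dagger=\lb\matA\matW\rb\matW^\dagger$, the same rank bound yields $\rank{\matA\matW\matW^\dagger}\leq\rank{\matA\matW}$, since appending the factor $\matW^\dagger$ on the right cannot raise the rank. Combining the two displays gives $\rank{\matA\matW}=\rank{\matA\matW\matW^\dagger}$, which is the claim. Equivalently — and this may read more cleanly in the write-up — I would phrase everything via column spaces: the inclusion $\CS{\matA\matW\matW^\dagger}\subseteq\CS{\matA\matW}$ is immediate, while the identity $\matA\matW=\matA\matW\matW^\dagger\matW$ gives $\CS{\matA\matW}\subseteq\CS{\matA\matW\matW^\dagger}$; equality of the two subspaces forces equality of their dimensions, i.e.\ of the ranks.

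There is essentially no obstacle here: the only ingredient beyond basic rank inequalities is the Penrose relation $\matW\matW^\dagger\matW=\matW$, which is standard. The one point to state carefully is that $\matA$ and $\matW$ have compatible dimensions, so that the products $\matA\matW$, $\matA\matW\matW^\dagger$, and $\matA\matW\matW^\dagger\matW$ are all well defined — this is already part of the hypothesis. I expect the final proof to occupy only two or three lines.
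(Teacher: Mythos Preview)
Your proposal is correct and follows essentially the same approach as the paper: the paper's proof is the single chain $\rank{\matA\matW}\geq\rank{\matA\matW\matW^{\dagger}}\geq\rank{\matA\matW\matW^{\dagger}\matW}=\rank{\matA\matW}$, using exactly the Penrose identity and the rank-of-a-product bound you invoke. Your write-up just splits the chain into two separate inequalities (and offers a column-space rephrasing), but the argument is identical in substance.
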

\begin{proof}
The result follows because
\begin{multline}
\rank{\matA\matW} \geq \rank{\matA\matW\matW^{\dagger}}
\geq \rank{\matA\matW\matW^{\dagger}\matW}\\=\rank{\matA\matW}.
\end{multline}
\end{proof}
\begin{lemma}\label{lem:spanning_set}
For a given nonzero square matrix $\matA\in\bbR^{N\times N}$, let $R=\rank{\matA^N}$. Then, for any given integers $N\leq p\leq q$, there exist real numbers $\{\alpha_i\}_{i=1}^{R}$ such that 
\begin{equation}\label{eq:spanning_set}
\matA^p = \matA^q\sum_{i=1}^{R}\alpha_i\matA^i.
\end{equation}
\end{lemma}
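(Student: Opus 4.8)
The plan is to decompose $\bbR^{N}$ into the subspace on which $\matA$ eventually acts invertibly and the one on which it acts nilpotently, reduce the claimed identity to a statement about the invertible part, and then settle that statement using the Cayley--Hamilton theorem. The role of the hypothesis is precisely that on the invertible part the relevant matrix is nonsingular, which is what makes the rigid ``no constant term, degree at most $R$'' shape of the polynomial attainable.

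First I would record the standard fact that the column spaces $\CS{\matA^{k}}$ are nested and non-increasing in $k$, hence stabilize no later than $k=N$; thus $\CS{\matA^{k}}=\CS{\matA^{N}}$ and $\rank{\matA^{k}}=R$ for all $k\ge N$. Setting $\calV=\CS{\matA^{N}}$ and $\calU=\ker{\matA^{N}}$, the identity $\rank{\matA^{2N}}=\rank{\matA^{N}}$ forces $\calV\cap\calU=\{\zero\}$, and since $\dim\calV+\dim\calU=N$ we obtain the Fitting decomposition $\bbR^{N}=\calV\oplus\calU$. Both summands are $\matA$-invariant; the restriction $\matT\triangleq\matA|_{\calV}$ is an invertible operator on the $R$-dimensional space $\calV$ (it is surjective since $\matA\calV=\CS{\matA^{N+1}}=\calV$), while $\matA|_{\calU}$ is nilpotent with $(\matA|_{\calU})^{N}=\zero$.

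Next I would observe that, because $p\ge N$ and $q\ge N$, both $\matA^{p}$ and $\matA^{q}$ vanish on $\calU$, so any identity of the form $\matA^{p}=\matA^{q}\sum_{i=1}^{R}\alpha_i\matA^{i}$ holds automatically on $\calU$; restricted to $\calV$ it reads $\matT^{p}=\matT^{q}\sum_{i=1}^{R}\alpha_i\matT^{i}$, equivalently $\sum_{i=1}^{R}\alpha_i\matT^{i}=\matT^{\,p-q}$ since $\matT$ is invertible. Hence it suffices to show that the (possibly negative) power $\matT^{\,p-q}$ lies in $\operatorname{span}\{\matT,\matT^{2},\ldots,\matT^{R}\}$. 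Cayley--Hamilton applied to the $R\times R$ matrix $\matT$ gives $\matT^{R}\in\operatorname{span}\{\eye,\matT,\ldots,\matT^{R-1}\}$, and — this is the crucial point — because $\det\matT\neq0$ the constant term of the characteristic polynomial of $\matT$ is nonzero, which lets me solve for $\eye$ and conclude $\eye\in\operatorname{span}\{\matT,\ldots,\matT^{R}\}$. Therefore $\operatorname{span}\{\matT,\ldots,\matT^{R}\}=\operatorname{span}\{\eye,\matT,\ldots,\matT^{R-1}\}=\bbR[\matT]$, and this algebra is stable under multiplication by $\matT$; since multiplication by $\matT$ is an injective, hence bijective, map of the finite-dimensional space $\bbR[\matT]$ to itself, $\matT^{-1}\in\bbR[\matT]$ and thus every integer power of $\matT$ lies in $\bbR[\matT]=\operatorname{span}\{\matT,\ldots,\matT^{R}\}$, in particular $\matT^{\,p-q}$. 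Choosing $\{\alpha_i\}_{i=1}^{R}$ to realize $\matT^{\,p-q}=\sum_{i=1}^{R}\alpha_i\matT^{i}$ and putting $g(x)=\sum_{i=1}^{R}\alpha_i x^{i}$, the identity $\matA^{p}=\matA^{q}g(\matA)$ follows by checking it separately on $\calV$ (where it becomes $\matT^{p}=\matT^{q}\matT^{\,p-q}$) and on $\calU$ (where both sides are $\zero$); the degenerate case $R=0$, i.e. $\matA$ nilpotent, is consistent since then the empty sum is $\zero$ and $\matA^{p}=\zero$ for $p\ge N$.

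The main obstacle I anticipate is controlling the \emph{shape} of the polynomial rather than its mere existence: a direct approach — for instance iterating a relation $\matA^{\ell}=\matA^{\ell+1}\beta(\matA)$ extracted from the minimal polynomial — yields a polynomial of uncontrolled degree carrying a nonzero constant term, and paring it down to $\sum_{i=1}^{R}\alpha_i\matA^{i}$ is exactly what forces the passage to the invertible part $\matT$ and the use of $\det\matT\neq0$. The remaining ingredients — stabilization of ranks and column spaces, the Fitting decomposition, and Cayley--Hamilton — are classical and should be quick to dispatch or cite.
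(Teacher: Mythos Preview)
Your proof is correct and follows essentially the same route as the paper: split $\matA$ into an invertible part and a nilpotent part, use $p,q\ge N$ to annihilate the nilpotent piece, and invoke Cayley--Hamilton together with invertibility on the surviving $R\times R$ block to realize $\matT^{p-q}$ in $\operatorname{span}\{\matT,\ldots,\matT^{R}\}$. The only cosmetic difference is that the paper obtains the splitting via the real Jordan canonical form whereas you use the Fitting decomposition directly; your treatment is in fact more explicit about why the absence of a constant term is achievable (namely $\det\matT\neq0$), a point the paper leaves implicit.
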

\begin{proof}
See \Cref{app:spanning_set}.
\end{proof}

Next, we prove the desired result using the above lemmas. We first note that the necessity of the rank condition in \eqref{eq:ness} is straightforward from \Cref{thm:output}. Therefore, to prove \eqref{eq:ness}, it is enough to show that when the system is $s$-sparse output controllable, the lower bound in \eqref{eq:ness} holds. Further, when $m\leq  \max_{0\leq i\leq N-1}  R_i$, the lower bound on sparsity in \eqref{eq:suff} reduces to $s=m$. This case is equivalent to the (non-sparse) output controllability. As a result, the sufficiency of the conditions in \eqref{eq:suff} in this case is straightforward from  \Cref{thm:output}. Thus, it is enough to prove the sufficiency of \eqref{eq:suff} for the case when $\max_{0\leq i\leq N-1}  R_i <m$. 

The proof for the necessity of sparsity bound in \eqref{eq:ness} and sufficiency of \eqref{eq:suff} when $\max_{0\leq i\leq N-1}  R_i <m$ is presented next. At a high level, the proof has the following steps:
\begin{enumerate}[label=\Alph*,leftmargin=0.4cm]
\item \label{st:A}We first use the Kalman decomposition to construct two matrices $\tilde{\matC}\in\bbR^{n\times r }$ and $\tilde{\matA}\in\bbR^{r\times r}$ with $r\triangleq\rank{\matW}$ such that 
\begin{equation}\label{eq:equivalent_scon}
R_i=\rank{\tilde{\matC}\tilde{\matA}^i}-\rank{\tilde{\matC}\tilde{\matA}^{i+1}},\; i\geq 0.
\end{equation}
\item \label{st:B} Using \eqref{eq:equivalent_scon}, we show that when the system is  output $s$-sparse controllable, \eqref{eq:ness} of \Cref{thm:necc_suff} holds. This proof implies that \eqref{eq:ness} is necessary for output $s$-sparse controllability.
\item \label{st:C} In this step, we consider the case when  $\max_{0\leq i\leq N-1}  R_i <m$. To prove the sufficiency of \eqref{eq:suff} in this case,  we assume that the rank condition and sparsity bound in \eqref{eq:suff} hold, i.e.,
\begin{equation}\label{eq:suff_assum}
s\geq \max_{0\leq i\leq N-1}  R_i.
\end{equation}
We then prove that any vector $\vecy\in\bbR^n$, there exists an $s$-sparse vector $\vecu\in\bbR^m$ satisfying
\begin{equation}
\ls \tilde{\matC}_{(i)}-\tilde{\matC}_{(i+1)}\rs\vecy =\ls\eye -
\tilde{\matC}_{(i+1)}\rs\matC\matA^i\matB\vecu,\label{eq:splitz}
\end{equation}
where we define $\tilde{\matC}_{(i)}\in\bbR^{n \times n}$ as
\begin{equation}\label{eq:tildeC_defn}
\tilde{\matC}_{(i)} = \tilde{\matC}\tilde{\matA}^{i}\lb \tilde{\matC}\tilde{\matA}^{i}\rb^{\dagger}.
\end{equation}
\item \label{st:D}Under the assumptions of Step \ref{st:C}, we prove that for any vector $\vecy\in\bbR^n$, there exist $s$-sparse vectors $\lc\vecu_k\in\bbR^m\rc_{k=1}^r$ such that
\begin{equation}\label{eq:zeroeigen}
\vecy =  \sum_{k=1}^{r} \matC\matA^{k-1}\matB\vecu_k+ \tilde{\matC}_{(r)}\ls  \vecy - \sum_{k=1}^{r}\matC\matA^{k-1}\matB\vecu_k\rs.
\end{equation}
\item \label{st:E} Finally, under the assumptions of Step \ref{st:C}, we also show that there exist an integer $0<K<\infty$ and $s$-sparse vectors $\lc\vecu_k\in\bbR^m\rc_{k=r+1}^{K}$ such that
\begin{equation}\label{eq:nonzeroeigen}
\tilde{\matC}_{(r)}\ls  \vecy - \sum_{k=1}^{r}\matC\matA^{k-1}\matB\vecu_k\rs =  \sum_{k=r+1}^{K} \matC\matA^{k-1}\matB\vecu_{k}.
\end{equation}
Combining Steps~\ref{st:D} and \ref{st:E}, we establish the sufficiency of \eqref{eq:suff} when $ \max_{0\leq i\leq N-1}  R_i < m$.
\end{enumerate}}
In the reminder of this section, we provide the details of each step.
\subsection{An Equivalent Definition of $R_i$ in \eqref{eq:Rdefn}}
By the Kalman decomposition~\cite[Section 6.4]{chen1998linear}, there exists an orthogonal matrix $\matQ$ such that
\begin{align}
\matQ&=\begin{bmatrix}
\tilde{\matQ} \in\bbR^{N\times r}& \matR\in\bbR^{N\times N-r}
\end{bmatrix}\in\bbR^{N\times N}\label{eq:Q_defn}\\
\matW &= \begin{bmatrix}
\tilde{\matQ} & \matR
\end{bmatrix} \begin{bmatrix}
\tilde{\matW}\in\bbR^{r\times Nm}\\
\zero\in\bbR^{N-r\times Nm}
\end{bmatrix}\label{eq:W_decom}\\
\matA  &= \begin{bmatrix}
\tilde{\matQ} & \matR
\end{bmatrix}\begin{bmatrix}
\tilde{\matA}\in\bbR^{r\times r} & \matA_{(1)}\\
\zero \in\bbR^{N-r\times r}& \matA_{(2)}
\end{bmatrix}\!\begin{bmatrix}
\tilde{\matQ} & \matR
\end{bmatrix}^{-1}\label{eq:Adecomp}\\
\matB  &= \begin{bmatrix}
\tilde{\matQ} & \matR
\end{bmatrix}\begin{bmatrix}
\tilde{\matB} \in\bbR^{r\times m}\\
\zero\in\bbR^{N-r\times m}
\end{bmatrix}.\label{eq:Bdecomp}
\end{align}
Then, for any integer $i\geq 0$, it is easy to see that
\begin{align}
\matC\matA^i\matW\matW^\dagger
&= \matC\matA^i\matQ \begin{bmatrix}
\tilde{\matW}\\
\zero
\end{bmatrix}\begin{bmatrix}
\tilde{\matW}^\dagger & \zero
\end{bmatrix}\matQ\tran\label{eq:eqv_inter_1}\\
&= \matC\begin{bmatrix}
\tilde{\matQ} & \matR
\end{bmatrix}\begin{bmatrix}
\tilde{\matA} & \matA_{(1)}\\
\zero & \matA_{(2)}
\end{bmatrix}^i\begin{bmatrix}
\eye &\zero\\
\zero&\zero
\end{bmatrix}\matQ\tran\label{eq:eqv_inter_2}\\
&=\matC
\begin{bmatrix}
\tilde{\matQ}\tilde{\matA}^i &\zero
\end{bmatrix}\matQ\tran
\label{eq:eqv_inter_4},
\end{align}
where to get \eqref{eq:eqv_inter_1}, we use \eqref{eq:W_decom} and \Cref{lem:prod}. Also, \eqref{eq:eqv_inter_2} follows from \eqref{eq:Adecomp} and the fact that $\tilde{\matW}$ is a full row rank matrix. Consequently, we conclude that
\change{\begin{equation}
\rank{\matC\matA^i\matW}=\rank{\matC\matA^i\matW\matW^\dagger} = \rank{\matC\tilde{\matQ}\tilde{\matA}^i},
\end{equation}
where we also use \Cref{lem:rank}.}
Thus, we establish \eqref{eq:equivalent_scon} by defining 
\begin{equation}\label{eq:C_decomp}
\tilde{\matC}\triangleq\matC\tilde{\matQ}\in\bbR^{n\times r},
\end{equation}
and Step \ref{st:A} is completed.
\subsection{Necessity of Sparsity Bound in \eqref{eq:necc_cor_2}}
Using \eqref{eq:Adecomp}, \eqref{eq:Bdecomp} and \eqref{eq:C_decomp}, we rewrite \eqref{eq:combinatorial} as
\begin{equation}\label{eq:C_trimmed}
\CS{\tilde{\matC}\begin{bmatrix}
\tilde{\matA}^{K-1}\tilde{\matB}_{\calS_1} & 
\tilde{\matA}^{K-2}\tilde{\matB}_{\calS_2}&\ldots& 
\tilde{\matB}_{\calS_K}
\end{bmatrix}}=\bbR^n.
\end{equation}
Here, for every integer $0\leq i\leq N-1$, the first $(K-i-1)s$ columns of the matrix in \eqref{eq:C_trimmed} belong to $\CS{\tilde{\matC}\tilde{\matA}^{i+1}}$. As a consequence, the last $(i+1)s$ columns of the matrix span the null space of $\tilde{\matC}\tilde{\matA}^{i+1}$. Since the dimension of the null space of $\tilde{\matC}\tilde{\matA}^{i+1}$ is $n-\rank{\tilde{\matC}\tilde{\matA}^{i+1}}$, we deduce that
\begin{equation}
(i+1)s \geq n-\rank{\tilde{\matC}\tilde{\matA}^{i+1}} = \sum_{j=0}^{i} R_j,
\end{equation}
where we use \eqref{eq:equivalent_scon}. Therefore, we obtain that when \eqref{eq:combinatorial} holds, the bound on $s$ given by \eqref{eq:ness} is satisfied. Thus, Step \ref{st:B} is completed.

\change{
\subsection{Characterizing $\CS{\tilde{\matC}_{(i)}-\tilde{\matC}_{(i+1)}}$}
The Kalman decomposition given by \eqref{eq:W_decom}-\eqref{eq:Bdecomp} ensures that the system defined by $\lb \tilde{\matA},\tilde{\matB}\rb$ is controllable~\cite[Section 6.4]{chen1998linear}. So, for any $\vecz\in\bbR^r$, there exist (non-sparse) vectors $\lc\vecv_k\in\bbR^m\rc_{k=1}^r$ such that
\begin{equation}\label{eq:z_decomp}
\vecz = \sum_{k=1}^r\tilde{\matA}^{k-1}\tilde{\matB}\vecv_k.
\end{equation}
This relation immediately implies that for any $\vecz\in\bbR^r$, there exists $\vecv_1\in\bbR^m$ such that
\begin{equation}\label{eq:proj_A}
\ls\tilde{\matC}_{(i)}-\tilde{\matC}_{(i+1)}\rs\tilde{\matC}\tilde{\matA}^i\vecz =\ls\eye-\tilde{\matC}_{(i+1)}\rs\tilde{\matC}\tilde{\matA}^i \tilde{\matB}\vecv_1,
\end{equation}
where we use the following fact from the definition of  $\tilde{\matC}_{(i)}$ given by \eqref{eq:tildeC_defn}: 
\begin{equation}\label{eq:prod_CA}
\ls\tilde{\matC}_{(i)}-\tilde{\matC}_{(i+1)}\rs \tilde{\matC}\tilde{\matA}^k = \begin{cases}
\ls\eye-\tilde{\matC}_{(i+1)}\rs\tilde{\matC}\tilde{\matA}^i & \text{ if } k=i\\
\zero& \text{ if } k>i.
\end{cases}
\end{equation}
The relation \eqref{eq:proj_A} leads to the following: 
\begin{equation}\label{eq:proj_A1}
 \CS{ \ls\tilde{\matC}_{(i)}-\tilde{\matC}_{(i+1)}\rs\tilde{\matC}\tilde{\matA}^i}\subseteq\CS{\ls\eye-\tilde{\matC}_{(i+1)}\rs\matC\matA^i \matB},
\end{equation}
which follows because $\matC\matA^i \matB=\tilde{\matC}\tilde{\matA}^i \tilde{\matB}$ from \eqref{eq:Adecomp}, \eqref{eq:Bdecomp} and \eqref{eq:C_decomp}.
Further, we have
\begin{align}
\rank{ \tilde{\matC}_{(i)}\!-\!\tilde{\matC}_{(i+1)} } &\geq \rank{ \ls \tilde{\matC}_{(i)}-\tilde{\matC}_{(i+1)}\rs \matC\matA^i}\label{eq:subspace_rel1}\\
&\geq\rank{ \ls \tilde{\matC}_{(i)}-\tilde{\matC}_{(i+1)}\rs \tilde{\matC}_{(i)}}\\
&=\rank{ \tilde{\matC}_{(i)}-\tilde{\matC}_{(i+1)} }.\label{eq:subspace_rel}
\end{align}
Here, the last step is due to the below relation which follows from the definition of  the matrix $\tilde{\matC}_{(i)}$ given by \eqref{eq:tildeC_defn} and symmetry of $\tilde{\matC}_{(i)}$ and $\tilde{\matC}_{(i+1)}$:
\begin{equation}\label{eq:CA_simple}
\ls \tilde{\matC}_{(i)} -\! \tilde{\matC}_{(i+1)}\rs \!\tilde{\matC}_{(i)}\!=\tilde{\matC}_{(i)} - \!\ls \tilde{\matC}_{(i)} \tilde{\matC}_{(i+1)} \rs\tran\!\!= \tilde{\matC}_{(i)}- \tilde{\matC}_{(i+1)}.
\end{equation}
From \eqref{eq:subspace_rel}, we obtain 
\begin{equation}\label{eq:subspaces_rank}
\rank{ \tilde{\matC}_{(i)}-\tilde{\matC}_{(i+1)} } = \rank{\! \ls \tilde{\matC}_{(i)}-\tilde{\matC}_{(i+1)}\rs \matC\matA^i}.
\end{equation}
As a result, we arrive at
\begin{align}
\CS{\tilde{\matC}_{(i)}-\tilde{\matC}_{(i+1)} } &= \CS{ \ls\tilde{\matC}_{(i)}-\tilde{\matC}_{(i+1)}\rs\tilde{\matC}\tilde{\matA}^i}\\&\subseteq\CS{\ls\eye-\tilde{\matC}_{(i+1)}\rs\matC\matA^i \matB},
\label{eq:subspcae_set}
\end{align}
which is due to \eqref{eq:proj_A1}. Further, since $\CS{\tilde{\matC}_{(i)}}\subseteq \CS{\tilde{\matC}_{(i+1)}}$, we have
\begin{equation}\label{eq:rank_tildeC1}
\rank{\tilde{\matC}_{(i)}}=\!\rank{ \tilde{\matC}_{(i+1)}}+\rank{ \ls \!\eye- \tilde{\matC}_{(i+1)}\! \rs\!\tilde{\matC}_{(i)}},
\end{equation}
which follows because $\eye-\tilde{\matC}_{(i+1)}$ is the projection on to the subspace orthogonal to $\CS{\tilde{\matC}_{(i+1)}}$. Using \eqref{eq:CA_simple}, we arrive at
\begin{align}
\rank{  \tilde{\matC}_{(i)}- \tilde{\matC}_{(i+1)} } &=\rank{\tilde{\matC}_{(i)}}-\rank{ \tilde{\matC}_{(i+1)}}\\
&=R_i\leq s,\label{eq:rank_tildeC}
\end{align}
which follows from \eqref{eq:equivalent_scon} and assumption \eqref{eq:suff_assum}. Combining \eqref{eq:rank_tildeC} and \eqref{eq:subspcae_set}, we prove \eqref{eq:splitz} because we need at most $s$ columns of $\ls\eye-\tilde{\matC}_{(i+1)}\rs\matC\matA^i \matB$ to span the column space of $\tilde{\matC}_{(i)}-\tilde{\matC}_{(i+1)}$. 
\subsection{Sparse Representation of  the Null Space of $\tilde{\matC}\tilde{\matA}^{r}$}
We prove a more general result: for any vector $\vecy\in\bbR^n$ and integer $1\leq i \leq r$, there exist $s$-sparse vectors $\lc\tilde{\vecu}_k\in\bbR^m\rc_{k=1}^{i}$ such that
\begin{equation}\label{eq:zeroeigen_rel}
\vecy =  \sum_{k=1}^{i} \matC\matA^{k-1}\matB\tilde{\vecu}_k + \tilde{\matC}_{(i)}\ls  \vecy - \sum_{k=1}^{i}\matC\matA^{k-1}\matB\tilde{\vecu}_k\rs.
\end{equation}

We prove \eqref{eq:zeroeigen_rel} using mathematical induction, and for this, we first verify this result for $i=1$. Using \eqref{eq:splitz} of Step \ref{st:C}, for any given $\vecy\in\bbR^n$, there exists an $s$-sparse vector $\tilde{\vecu}_{1}\in\bbR^m$ such that
\begin{equation}\label{eq:splity_0}
\ls \tilde{\matC}_{(0)}- \tilde{\matC}_{(1)} \rs\vecy =\ls\eye -
 \tilde{\matC}_{(1)}\rs\matC\matB\tilde{\vecu}_{1}.
\end{equation}
However, we observe from \eqref{eq:W_decom} that $\CS{\matW}=\CS{\tilde{\matQ}}$, and this observation combined with the rank condition \eqref{eq:suff} of \Cref{thm:necc_suff} leads to the following:
\begin{equation}
\rank{\tilde{\matC}}=\rank{\matC\tilde{\matQ}}= \rank{\matC\matW} = n.
\end{equation}
As a result, we get 
\begin{equation}\label{eq:rankCW}
\tilde{\matC}_{(0)} 
 = \tilde{\matC}\tilde{\matC}^\dagger =\eye.
\end{equation}
Therefore, \eqref{eq:splity_0} yields that for any $\vecy\in\bbR^n$, there exists an $s$-sparse vector $\tilde{\vecu}_{1}\in\bbR^m$ such that 
\begin{equation}\label{eq:splity_1}
\vecy =  \matC\matB\tilde{\vecu}_{1}+\tilde{\matC}_{(1)}\lb \vecy - \matC\matB\tilde{\vecu}_{1}\rb.
\end{equation}
Consequently, \eqref{eq:zeroeigen_rel} holds for $i=1$. 

By inductive hypothesis, we assume that \eqref{eq:zeroeigen_rel} holds for some integer $0\leq i<r$. 
However, in \eqref{eq:zeroeigen_rel}, we have
\begin{equation}
\tilde{\matC}_{(i)}\ls  \vecy - \sum_{k=1}^{i}\matC\matA^{k-1}\matB\tilde{\vecu}_k\rs\in\bbR^n.
\end{equation}
So we again apply \eqref{eq:splitz} to deduce that there exists an $s$-sparse vector $\tilde{\vecu}_{i+1}\in\bbR^m$ such that
\begin{multline}\label{eq:inter_proof1}
\ls \tilde{\matC}_{(i)}- \tilde{\matC}_{(i+1)} \rs \tilde{\matC}_{(i)} \ls  \vecy - \sum_{k=1}^{i}\matC\matA^{k-1}\matB\tilde{\vecu}_k\rs \\
= \ls\eye -\matC_{(i+1)}\rs\matC\matA^i\matB\vecu_{i+1}.
\end{multline}
Combining \eqref{eq:inter_proof1} and \eqref{eq:CA_simple}, we deduce that
\begin{multline}\label{eq:inter_proof}
\tilde{\matC}_{(i)}\ls  \vecy - \sum_{k=1}^{i}\matC\matA^{k-1}\matB\tilde{\vecu}_k\rs \\ = \matC\matA^i\matB\vecu_{i+1} +
 \tilde{\matC}_{(i+1)}\ls \vecy - \sum_{k=1}^{i+1}\matC\matA^{k-1}\matB\tilde{\vecu}_k\rs.
\end{multline}
Adding \eqref{eq:inter_proof} and  the inductive hypothesis \eqref{eq:zeroeigen_rel}, we get
\begin{equation}
 \vecy 
=  \sum_{k=1}^{i+1}\matC\matA^{k-1}\matB\tilde{\vecu}_k 
+ \tilde{\matC}_{(i+1)}\ls \vecy - \sum_{k=1}^{i+1}\matC\matA^{k-1}\matB\tilde{\vecu}_k\rs.
\end{equation}

In conclusion, we obtain that the desired result \eqref{eq:zeroeigen_rel} holds for $i+1$, and thus, the relation \eqref{eq:zeroeigen_rel} is proved. Finally, choosing $i=r$ in \eqref{eq:zeroeigen_rel}, we complete Step~\ref{st:D}.}

\subsection{Sparse Representation of  the Column Space of $\tilde{\matC}\tilde{\matA}^{r}$}
Using \eqref{eq:z_decomp}, for any $\vecz\in\bbR^{r}$, there exist (non-sparse) vectors $\lc\vecv_k\in\bbR^m\rc_{k=1}^r$ such that
\begin{equation}
 \tilde{\matC}\tilde{\matA}^{r}\vecz
 =  \tilde{\matC}\sum_{k=1}^r\tilde{\matA}^{r+k-1}\tilde{\matB}\vecv_k.
\end{equation}
Here, $\vecv_k\in\bbR^m$ can be represented as $
\vecv_k = \sum_{j=1}^{\lceil m/s\rceil}\vecu_{j}^{(k)},$ where $\lc \vecu_j^{(k)}\in\bbR^m\rc_{j,k}$ are all $s$-sparse vectors. Therefore,
 \begin{equation}\label{eq:Aexpand1}
 \tilde{\matC}\tilde{\matA}^{r}\vecz = \tilde{\matC}\sum_{k=1}^r\sum_{j=1}^{\lceil m/s\rceil}\tilde{\matA}^{r+k-1}\tilde{\matB}\vecu_{j}^{(k)}.
 \end{equation}
However, from \Cref{lem:spanning_set},  there exist $\lc\alpha_i^{(j,k)}\in\bbR\rc_{i=1}^{\rank{\tilde{\matA}^r}}$ such that
\begin{equation}\label{eq:Aexpand}
\tilde{\matA}^{r+k-1}\tilde{\matB}\vecu_{j}^{(k)} = \tilde{\matA}^{r+q_{k,j}}\sum_{i=1}^{\rank{\tilde{\matA}^r}}\alpha_i^{(j,k)}\tilde{\matA}^{i}\tilde{\matB}\vecu_{j}^{(k)},
\end{equation}
where we define the quantity $q_{k,j}\geq k-1$ as
\begin{equation}
q_{k,j}\triangleq \ls (k-1)\left\lceil \frac{m}{s}\right\rceil +(j-1)\rs\rank{\tilde{\matA}^r}.
\end{equation}
Substituting \eqref{eq:Aexpand} into \eqref{eq:Aexpand1}, we deduce that 
\begin{align}
 \tilde{\matC}\tilde{\matA}^{r}\vecz
 = \tilde{\matC}\tilde{\matA}^{r}\sum_{k=1}^r\sum_{j=1}^{\lceil m/s\rceil}\sum_{i=1}^{\rank{\tilde{\matA}^r}}\tilde{\matA}^{q_{k,j}+i}\tilde{\matB}\lb \alpha_i^{(j,k)}\vecu_{j}^{(k)}\rb\\
  = \matC\matA^r\sum_{k=1}^r\sum_{j=1}^{\lceil m/s\rceil}\sum_{i=1}^{\rank{\tilde{\matA}^r}}\matA^{q_{k,j}+i}\matB\lb \alpha_i^{(j,k)}\vecu_{j}^{(k)}\rb,\label{eq:zrel}
\end{align}
 which follows from \eqref{eq:Adecomp}, \eqref{eq:Bdecomp}, and \eqref{eq:C_decomp}.
Here, the powers of $\matA$ in each term of the summation are distinct, and $\alpha_i^{(j,k)}\vecu_{j}^{(k)}$ is $s$-sparse, for all values of $i,j$ and $k$. Consequently, for any vector $\vecz\in\bbR^r$, there exists an integer $0<K= r+r\lceil m/s\rceil\rank{\tilde{\matA}^r}<\infty$ and $s$-sparse vectors $\lc\vecu_{k}\in\bbR^m\rc_{k=r+1}^K$ such that
\begin{equation}\label{eq:nonzeroeigen_1}
\tilde{\matC}\tilde{\matA}^{r}\vecz =  \sum_{k=r+1}^{K} \matC\matA^{k-1}\matB\vecu_{k}.
\end{equation}

\change{Finally, we choose 
\begin{equation}
\vecz= \lb \tilde{\matC}\tilde{\matA}^{r}\rb^{\dagger}\ls\vecy - \sum_{k=1}^{r}\matC\matA^{k-1}\matB\vecu_k\rs\in\bbR^r,
\end{equation} in \eqref{eq:nonzeroeigen_1} to complete Step \ref{st:E}, and \Cref{thm:necc_suff} is proved.}

\hfill\qed

\section{Proof of \Cref{lem:spanning_set}}\label{app:spanning_set}
\change{To prove the result, we consider the real Jordan canonical form~\cite{horn2012matrix} of $\matA$:
\begin{equation}
\matA = \matP^{-1}\begin{bmatrix}
\matJ&\zero\\
\zero & \matN
\end{bmatrix}\matP,
\end{equation}
where $\matP\in\bbR^{N\times N}$ is an invertible matrix. Also, $\matJ\in\bbR^{P\times P}$ and $\matN\in\bbR^{N-P\times N-P}$ are formed by the Jordan blocks of $\matA$ corresponding to the nonzero and zero eigenvalues of $\matA$, respectively. In other words, $\matJ$ is an invertible matrix and $\matN$ is a nilpotent matrix, i.e., $\matN^N=\zero$. Consequently, the desired result \eqref{eq:spanning_set} is equivalent to 
\begin{align}
\matP^{-1}\begin{bmatrix}
\matJ^p&\zero\\
\zero & \zero
\end{bmatrix}\matP \notag\\
&\hspace{-1.5cm}= \matP^{-1}\begin{bmatrix}
\matJ^q&\zero\\
\zero & \zero
\end{bmatrix} \lb \sum_{i=1}^{R}\alpha_i\begin{bmatrix}
\matJ^i&\zero\\
\zero & \matN^i
\end{bmatrix}\matP\rb\\
&\hspace{-1.5cm}= \matP^{-1}\begin{bmatrix}
\matJ^q \sum_{i=1}^{R}\alpha_i \matJ^i&\zero\\
\zero & \zero
\end{bmatrix}\matP,
\end{align}
which follows because $p,q\geq N$. Hence, to prove \Cref{lem:spanning_set}, it suffices to show that for any $N\leq p\leq q$, there exist real numbers $\{\alpha_i\}_{i=1}^{R}$ such that
\begin{equation}\label{eq:expand_J}
\matJ^{p-q} = \sum_{i=1}^{R}\alpha_i \matJ^i.
\end{equation}
For this, we first note that 
\begin{equation}\label{eq:Rr}
R=\rank{\matA^N}=\rank{\matJ^N}=\rank{\matJ}=P,
\end{equation}
which is due to the invertibility of $\matJ$. Consequently, by Cayley-Hamilton theorem, we know that the characteristic polynomial of $\matJ$ has degree at most $R$. Hence, for any integer $p-q$, the relation \eqref{eq:expand_J} holds. Thus, the proof is complete.
\hfill\qed}

\section{Proof of \Cref{cor:necc_suff_1}}\label{app:necc_suff_1}
When $\max_{0\leq i\leq N-1} R_i=R_0$, the relation \eqref{eq:suff} reduces to the following:
\begin{equation}\label{eq:bnd_s}
\rank{\matC\matW}=n\text{ and } s\geq R_0=\change{n- \rank{\matC\matA\matW}}.
\end{equation}
Since \eqref{eq:ness} also implies that $s\geq R_0$, and the necessary conditions are less stringent than the sufficiency conditions, we conclude that both \eqref{eq:ness} and \eqref{eq:suff} reduce to \eqref{eq:necc_condition}. So the sufficiency of \eqref{eq:necc_condition} \change{(``if'' part of \Cref{cor:necc_suff_1})} is proved.

Next, we prove the necessity of \eqref{eq:necc_condition} \change{(``only if'' part of \Cref{cor:necc_suff_1})}. For this, suppose that there exists an integer $1\leq i^*\leq N$ such that
$R_0<\max_{0\leq i\leq N-1}  R_i$. Then, 
\begin{align}
\max_{0\leq i\leq N-1}  R_i &>\max_{0\leq i\leq N-1}\frac{R_0+\sum_{j=1}^ i R_{i}}{i+1}\\
&= \max_{0\leq i\leq N-1}\frac{\sum_{j=0}^ i R_{i}}{i+1}.
\end{align}
Thus, the necessary conditions and the sufficient conditions are different when $\max_{0\leq i\leq N-1} R_i\neq R_0$, and our proof is complete.

\hfill\qed
\section{Proof of \Cref{cor:necc_suff_2}}\label{app:necc_suff_2}
From Step \ref{st:A} of the proof of \Cref{thm:necc_suff} given in \Cref{app:necc_suff}, 
\begin{equation}
R_i = \rank{\tilde{\matC}\tilde{\matA}^{i}} -\rank{\tilde{\matC}\tilde{\matA}^{i+1}},
\end{equation}
where $\tilde{\matA}\in\bbR^{r\times r}$ and $\tilde{\matC}\in\bbR^{N\times r}$ are as defined in \eqref{eq:Adecomp} and \eqref{eq:C_decomp}, respectively, and $r=\rank{\matW}$. Using the Sylvester rank inequality~\cite[Section 0.4.5]{horn2012matrix}, we deduce that
\begin{equation}\label{eq:cor_inter}
R_i \leq r-\rank{\tilde{\matA}}=\rank{\matW}-\rank{\tilde{\matA}}.
\end{equation}
Here, we simplify the second term as follows:
\begin{align}
\rank{\tilde{\matA}}&= \rank{\begin{bmatrix}
\tilde{\matA} & \matA_{(1)}\\
\zero & \matA_{(2)}
\end{bmatrix}\begin{bmatrix}
\eye & \zero\\
\zero & \zero
\end{bmatrix}}\notag\\
&= \rank{\matQ\begin{bmatrix}
\tilde{\matA} & \matA_{(1)}\\
\zero & \matA_{(2)}
\end{bmatrix}\matQ^{-1}\matQ\begin{bmatrix}
\eye & \zero\\
\zero & \zero
\end{bmatrix}\matQ^{-1}}\notag\\
&=\rank{\matA\matW\matW^\dagger}=\rank{\matA\matW},\label{eq:cor_inter_31}
\end{align}
where $\matA_{(1)}$ and $\matA_{(2)}$ are defined in \eqref{eq:Adecomp}, and $\matQ$ is defined in \eqref{eq:Q_defn}. Also, \eqref{eq:cor_inter_31} follows from the arguments similar to those in \eqref{eq:eqv_inter_1}-\eqref{eq:eqv_inter_4}, and \Cref{lem:rank}. Substituting  \eqref{eq:cor_inter_31} into \eqref{eq:cor_inter}, we obtain
\begin{equation}
R_i \leq \rank{\matW}-\rank{\matA\matW}\leq N - \rank{\matA},\label{eq:cor_inter_4}
\end{equation}
where we use the Sylvester rank inequality~\cite[Section 0.4.5]{horn2012matrix}.

\change{Hence, using the condition in \Cref{cor:necc_suff_2}, we arrive at
 \begin{equation}
 s\geq  \max_{0\leq i\leq N-1}  R_i \geq \min\lc m, \max_{0\leq i\leq N-1}  R_i\rc.
 \end{equation}
This relation implies that the sufficient condition  \eqref{eq:suff}  of \Cref{thm:necc_suff} holds, and} the desired result follows.

\hfill\qed
\bibliographystyle{IEEEtran}
\bibliography{PControl_cite}

\end{document}